\documentclass[12pt,draftclsnofoot,onecolumn]{IEEEtran}
\usepackage{epsfig,latexsym,amsmath,amssymb,setspace,cite,color,graphicx,algorithm,algorithmic,cases}
\usepackage{amssymb,amsmath,amsfonts,bm,epsfig,graphicx,latexsym}
\usepackage{rotating,setspace,latexsym,epsf,color}
\usepackage{cite,authblk,epstopdf,footnote}
\usepackage{float}
\usepackage[inline]{enumitem}
\usepackage{tabu}
\usepackage{array}
\usepackage[subrefformat=parens,labelformat=parens]{subfig}
\usepackage{bbm}
\usepackage{amsthm}
\usepackage{thmtools}
\usepackage{setspace}
\usepackage{multirow} 
\usepackage{algorithm,algorithmic}

\usepackage{eqparbox}

\declaretheorem[style=plain,qed=$\blacksquare$]{theorem}
\declaretheorem[style=plain,name=Definition,qed=$\blacksquare$]{Definition}

\declaretheorem[style=plain,name=Remark,qed=$\blacksquare$]{remark}
\declaretheorem[style=plain,name=Example,qed=$\blacksquare$]{example}


\allowdisplaybreaks

\def\mc{\ensuremath\mathcal}

\newcounter{opt_ct}
\stepcounter{opt_ct}

\begin{document}


\title{Device-to-Device Coded Caching with Distinct Cache Sizes\thanks{ This work was presented in part at the IEEE International Conference on Communications (ICC), Kansas City, MO, 2018. This work was supported in part by NSF grant CCF-1749665.}
}

\author{
\IEEEauthorblockN{Abdelrahman M. Ibrahim, Ahmed A. Zewail and Aylin Yener}\\
    \IEEEauthorblockA{Wireless Communications and Networking Laboratory (WCAN)\\
    School of Electrical Engineering and Computer Science\\
    The Pennsylvania State University, University Park, PA 16802\\
   \textit{ami137@psu.edu \ \ \ zewail@psu.edu \ \ \ yener@ee.psu.edu}
}
}
\maketitle
\vspace{-0.3in}
\begin{abstract}
This paper considers a cache-aided device-to-device (D2D) system where the users are equipped with cache memories of different size. During low traffic hours, a server places content in the users' cache memories, knowing that the files requested by the users during peak traffic hours will have to be delivered by D2D transmissions only. The worst-case D2D delivery load is minimized by jointly designing the uncoded cache placement and linear coded D2D delivery. Next, a novel lower bound on the D2D delivery load with uncoded placement is proposed and used in explicitly characterizing the minimum D2D delivery load (MD2DDL) with uncoded placement for several cases of interest. In particular, having characterized the MD2DDL for equal cache sizes, it is shown that the same delivery load can be achieved in the network with users of unequal cache sizes, provided that the smallest cache size is greater than a certain threshold. The MD2DDL is also characterized in the small cache size regime, the large cache size regime, and the three-user case. Comparisons of the server-based delivery load with the D2D delivery load are provided. Finally, connections and mathematical parallels between cache-aided D2D systems and coded distributed computing (CDC) systems are discussed.

\end{abstract}
\vspace{-0.05in}
\begin{IEEEkeywords}
\vspace{-0.1in}
Coded caching, uncoded placement, device-to-device communication, unequal cache sizes. 
\end{IEEEkeywords}
\vspace{-0.1in}


\newpage
\section{Introduction}

Development of novel techniques that fully utilize network resources is imperative to meet the objectives of 5G systems and beyond with increasing demand for wireless data traffic, e.g., video-on-demand services \cite{cisco}. Device-to-device (D2D) communications \cite{asadi2014survey} and caching \cite{maddah2016coding} are two prominent techniques for alleviating network congestion. D2D communications utilize the radio interface enabling the nodes to directly communicate with each other to reduce the delivery load on servers/base stations/access points. Caching schemes utilize the nodes' cache memories to shift some of the network traffic to low congestion periods. In coded caching \cite{maddah2014fundamental}, the server jointly designs the content placement during off-peak hours and the content delivery during peak hours, to create multicast coding opportunities. That is, coded caching not only shifts the network traffic to off-peak hours but also creates multicast opportunities that reduce the delivery load on the server \cite{maddah2014fundamental}. In particular, in the placement phase, the server first partitions the files into pieces. Then, the server either places uncoded or coded pieces of the files at the users' cache memories. Most of the work on coded caching considers uncoded placement \cite{maddah2014fundamental,maddah2015decentralized,ji2016fundamental,yu2018exact,
wan2016optimality,wan2017novel,wang2015fundamental,amiri2017decentralized,sengupta2016layered,
ibrahim2017centralized,ibrahim2017optimization,ibrahim2018coded}, for its practicality and near optimality \cite{wan2016optimality,wan2017novel,yu2018exact}. References \cite{wan2016optimality,wan2017novel} have illustrated that the server-based delivery problem in \cite{maddah2014fundamental} is equivalent to an index-coding problem and the delivery load in \cite{maddah2014fundamental} is lower bounded by the acyclic index-coding bound \cite[Corollary 1]{arbabjolfaei2013capacity}. Reference \cite{yu2018exact} has proposed an alternative proof for the uncoded placement bound \cite{wan2016optimality,wan2017novel} using a genie-aided approach.


Coded caching in device-to-device networks has been investigated in \cite{ji2016fundamental,ji2016wireless,ji2017fundamental,shabani2016mobility,
tebbi2017coded,chorppath2017network,awan2015fundamental,zewail2018device}. In particular, D2D coded caching was first considered in \cite{ji2016fundamental}, where centralized and decentralized caching schemes have been proposed for when the users have equal cache sizes. References \cite{ji2016fundamental,ji2016wireless,ji2017fundamental,shabani2016mobility} have studied the impact of coded caching on throughput scaling laws of D2D networks under the protocol model in \cite{gupta2000capacity}. Reference \cite{tebbi2017coded} has considered a D2D system where only a subset of the users participate in delivering the missing subfiles to all users. Reference \cite{chorppath2017network} has proposed using random linear network coding to reduce the delay experienced by the users in lossy networks. Reference \cite{awan2015fundamental} has proposed a secure D2D delivery scheme that protects the D2D transmissions in the presence of an eavesdropper. Reference \cite{zewail2018device} has considered secure D2D coded caching when each user can recover its requested file and is simultaneously prevented from accessing any other file.

More realistic caching models that reflect the heterogeneity in content delivery networks consider systems with distinct cache sizes \cite{yang2018coded,wang2015fundamental,amiri2017decentralized,sengupta2016layered,
ibrahim2017centralized,ibrahim2017optimization,ibrahim2018coded,daniel2017optimization,
cao2018coded}, unequal file sizes \cite{zhang2015codedf,li2017rate,daniel2017optimization}, distinct distortion requirements \cite{yang2018coded,hassanzadeh2015distortion,ibrahim2018distortion}, and non-uniform popularity distributions \cite{niesen2014nonunif,niesen2017coded,zhang2015coded,hachem2015effect,
hachem2017coded,jin2017structural}. In this work, we focus on the distinct cache sizes, i.e., the varying storage capabilities of the users. This setup has been considered in \cite{yang2018coded,wang2015fundamental,amiri2017decentralized,sengupta2016layered,
ibrahim2017centralized,ibrahim2017optimization,ibrahim2018coded,daniel2017optimization,
cao2018coded} for the server-based delivery problem of \cite{maddah2014fundamental}. In particular, in \cite{ibrahim2017centralized,ibrahim2018coded}, we have shown that the delivery load is minimized by solving a linear program over the parameters of the uncoded placement and linear delivery schemes.

Different from \cite{ibrahim2017centralized,ibrahim2018coded} and all references with distinct cache sizes, in this paper, we investigate coded caching with end-users of unequal cache sizes when the delivery phase must be carried out by D2D transmissions. That is, the placement and delivery design must be such that the server does not participate in delivery at all, thus saving its resources to serve those outside the D2D network. This distinction calls for new placement and delivery schemes as compared to serve-based delivery architectures \cite{yang2018coded,wang2015fundamental,amiri2017decentralized,sengupta2016layered,
ibrahim2017centralized,ibrahim2017optimization,ibrahim2018coded,daniel2017optimization,
cao2018coded}. In the same spirit as \cite{ibrahim2017centralized}, we show that a linear program minimizes the D2D delivery load by optimizing over the partitioning of the files in the placement phase and the size and structure of the D2D transmissions, and find the optimal design. 

Building on the techniques in \cite{wan2016optimality,wan2017novel,yu2018exact}, we derive a lower bound on the worst-case D2D delivery load with uncoded placement, which is also defined by a linear program. Using the proposed lower bound, we first prove the optimality of the caching scheme in \cite{ji2016fundamental} assuming uncoded placement for systems with equal cache sizes. Next, we explicitly characterize the D2D delivery load memory trade-off assuming uncoded placement for several cases of interest. In particular, we show that the D2D delivery load depends only on the total cache size in the network whenever the smallest cache size is greater than a certain threshold. For a small system with three users, we identify the precise trade-off for any library size. For larger systems, we characterize the trade-off in two regimes, i.e., the small total cache size regime and in the large total cache size regime, which are defined in the sequel. For remaining sizes of the total network cache, we observe numerically that the proposed caching scheme achieves the minimum D2D delivery load assuming uncoded placement. Finally, we establish the relationship between the server-based and D2D delivery loads assuming uncoded placement. We also discuss the parallels between the recent coded distributed computing (CDC) framework \cite{li2018fundamental} and demonstrate how it relates to D2D caching systems.


The remainder of this paper is organized as follows. In Section \ref{sec_sysmod}, we describe the system model and the main assumptions. The optimization problems characterizing the upper and lower bounds on the minimum D2D delivery load are formulated in Section \ref{sec_formulation}. 
Section \ref{sec_results} summarizes our results on the minimum D2D delivery load with uncoded placement. The general caching scheme is developed in Section \ref{sec_cach}. Section \ref{sec_achiev} explains the caching schemes that achieve the D2D delivery loads presented in Section \ref{sec_results}. The optimality of uncoded placement is investigated in Section \ref{sec_lowerbound}. In Section \ref{sec_disc}, we discuss the trade-off in the general case, the connection to server-based systems, and connections to distributed computing. Section \ref{sec_concl} provides the conclusions.

\begin{figure}[t]
	\includegraphics[scale=1.2]{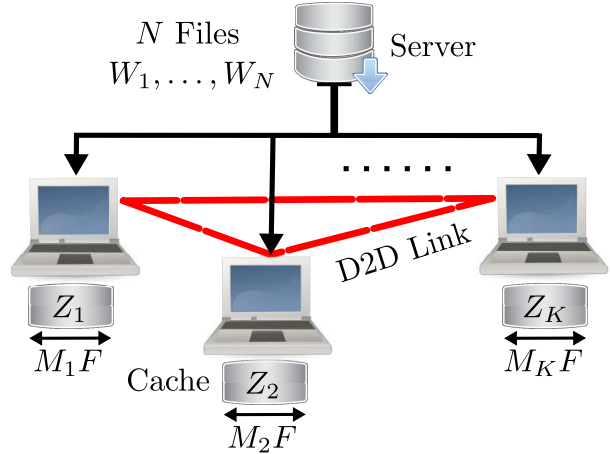}
	\centering
	\caption{D2D caching with unequal cache sizes at the end-users.}\label{fig:sys_model}
	\vspace{-.1 in}
\end{figure}


\vspace{-.1 in}
\section{System Model}\label{sec_sysmod}
\textit{Notation:} Vectors are represented by boldface letters, $ \oplus$ refers to bitwise XOR operation, $|W|$ denotes size of $W$, $\mc A \setminus \mc B $ denotes the set of elements in $\mc A$ and not in $\mc B $, $[K] \triangleq \{1,\dots,K\}$, $ \phi$ denotes the empty set, $\subsetneq_{\phi} [K]$ denotes non-empty subsets of $[K]$, and $\mc P_{\mc A}$ is the set of all permutations of the elements in the set $\mc A$, e.g., $\mc P_{\{1,2\}}= \{[1,2], \ [2,1]\}$. 

Consider a server connected to $K$ users via a shared error-free link, and the users are connected to each other via error-free device-to-device (D2D) communication links, as illustrated in Fig. \ref{fig:sys_model}. The server has a library of $N$ files, $ W_{1}, \dots, W_{N}$, each with size $F$ bits. End-users are equipped with cache memories that have different sizes, the size of the cache memory at user $k$ is equal to $M_k F$ bits. Without loss of generality, let $M_1 \leq M_2 \leq \dots \leq M_K$. Define $m_k $ to denote the memory size of user $k$ normalized by the library size $N F$, i.e., $m_k=M_k/N$. Let $\bm M =[M_1,\dots,M_K]$ and $\bm m =[m_1,\dots,m_K]$. We focus on the more practical case where the number of users is less than the number of files, i.e., $K \leq N $, e.g., a movie database serving cooperative users in a 5G hybrid cloud-fog access network \cite{ku20175g}.

D2D caching systems operate similarly to server-based systems in the placement phase, but differ in the delivery phase. Namely, in the placement phase, the server designs the users' cache contents without knowing their demands and knowing that it will not participate in the delivery phase. The content of the cache at user $k$ is denoted by $Z_k$ and satisfies the size constraint $|Z_k| \leq M_k F$ bits. Formally, $Z_k$ is defined as follows.
\begin{Definition}(Cache placement) A cache placement function $\phi_k: [2^F]^N\rightarrow [2^F]^{M_k}$ maps the files in the library to the cache memory of user $k$, i.e., $ Z_k = \phi_k(W_1, W_2,..,W_N) $. 
\end{Definition}
Just before the delivery phase, users announce their file demands; we consider that the demand vector $\bm d=[d_1, \dots, d_K]$ is independently and identically distributed (i.i.d.) over the files \cite{maddah2014fundamental}. User $k$'s requested file is denoted by $W_{d_k}$. The requested files must be delivered by utilizing D2D communications only \cite{ji2016fundamental}, which requires that the sum of the users' cache sizes is at least equal to the library size, i.e., $\sum_{k=1}^{K} m_k \geq 1$. More specifically, user $j$ transmits the sequence of unicast/multicast signals, $X_{j \rightarrow \mc T, \bm d}$, to the users in the set $\mc T \subsetneq_{\phi} [K] \! \setminus \! \{j\}$. Let $|X_{j \rightarrow \mc T, \bm d}| = v_{j \rightarrow \mc T} F$ bits, i.e., the \textit{transmission variable} $v_{j \rightarrow \mc T} \in [0,1]$ represents the amount of data delivered to the users in $\mc T$ by user $j$ as a fraction of the file size $F$. 
\begin{Definition}(Encoding) Given demand $\bm d$, an encoding $\psi_{j \rightarrow \mc T}: [2^F]^{M_j} \times [N]^{K}\rightarrow [2^{F}]^{v_{j \rightarrow \mc T}} $ maps the content cached by user $j$ to a signal sent to the users in $\mc T \subsetneq_{\phi} [K] \! \setminus \! \{j\}$, i.e., the signal $X_{j \rightarrow \mc T,\bm d}= \psi_{j \rightarrow \mc T}(Z_j,\bm d)$ and $|X_{j \rightarrow \mc T, \bm d}| = v_{j \rightarrow \mc T} F$.
\end{Definition}
At the end of the delivery phase, user $k$ must be able to reconstruct $W_{d_k}$ reliably using the received D2D signals $ \left\lbrace X_{j \rightarrow \mc T, \bm d} \right\rbrace_{j \neq k, \mc T}$ and its cache content $Z_k$. Let $R_{j} \triangleq \sum\limits_{\mc T \subsetneq_{\phi} [K] \! \setminus \! \{j\}} v_{j \rightarrow \mc T}$ be the amount of data transmitted by user $j$, normalized the file size $F$. 
\begin{Definition}(Decoding) Given the demand $\bm d$, a decoding function $\mu_{k}: [2^{F}]^{\sum_{j \neq k} R_{j}} \times [2^{F}]^{M_k} \times [N]^{K} \rightarrow [2^F]$, maps the D2D signals $X_{j \rightarrow \mc T, \bm d},$ $\forall j \in [K] \setminus \{k\}, \mc T \subsetneq_{\phi} [K] \setminus \{j\}$ and the content cached by user $k$ to $\hat W_{d_k}$, i.e., $\hat W_{d_k} = \mu_{k}\left(\left\lbrace X_{j \rightarrow \mc T, \bm d} \right\rbrace_{j \neq k, \mc T}, Z_k,\bm d \right)$.
\end{Definition}
The achievable D2D delivery load is defined as follows.
\begin{Definition} 
For a given $\bm m$, the D2D delivery load $R(\bm m)\triangleq \sum_{j=1}^{K} R_{j}(\bm m)$ is said to be achievable if for every $\epsilon > 0$ and large enough $F$, there exists $(\phi_k(.),\psi_{j \rightarrow \mc T}(.),\mu_{ k}(.))$ such that $\max\limits_{\bm d, k \in [K]} Pr(\hat W_{d_k}\neq W_{d_k})\leq \epsilon$, and $R^*(\bm m) \triangleq \inf \{ R : R(\bm m) \text{ is achievable} \}$. 
\end{Definition}
In this work, similar to much of the coded caching literature \cite{maddah2014fundamental,maddah2015decentralized,ji2016fundamental,yu2018exact,
wan2016optimality,wan2017novel,wang2015fundamental,amiri2017decentralized,sengupta2016layered,
ibrahim2017centralized,ibrahim2017optimization,ibrahim2018coded,daniel2017optimization}, we will consider placement schemes where the users cache only pieces of the files, i.e., uncoded placement. We denote the set of such schemes with $\mathfrak{A}$. In the delivery phase, we consider the class of delivery policies $\mathfrak{D}$, where the users generate the multicast signals with XORed pieces of files. For a caching scheme in $(\mathfrak{A},\mathfrak{D})$, we define the following. 
\begin{Definition}
For an uncoded placement scheme in $ \mathfrak A$, and a linear delivery policy in $\mathfrak{D}$, the achievable worst-case D2D delivery load is defined as 
\begin{equation}
R_{\mathfrak A, \mathfrak D} \triangleq \max_{\bm d \in [N]^{K}} \sum_{j=1}^{K} R_{j,\bm d,\mathfrak A, \mathfrak D}  = \sum_{j=1}^{K} \sum_{\mc T \subsetneq_{\phi} [K] \setminus \{j\}} v_{j \rightarrow \mc T},
\end{equation}
and $R^*_{\mathfrak A,\mathfrak D}$ denotes the minimum delivery load achievable with a caching scheme in $(\mathfrak A,\mathfrak D)$. 
\end{Definition}
\begin{Definition} For an uncoded placement scheme in $ \mathfrak A$ and any delivery scheme, we have
\begin{equation}
R_{\mathfrak A}^*(\bm m) \triangleq \inf \{ R_{\mathfrak A} : R_{\mathfrak A}(\bm m) \text{ is achievable} \},
\end{equation} 
which denotes the minimum D2D delivery load achievable with uncoded placement.
\end{Definition}
\section{Main Results}

\subsection{Performance Bounds}\label{sec_formulation}
First, we have the following parameterization for the optimum of the class of caching schemes under consideration. 
\begin{theorem}\label{thm_gen_achv} Given $N \geq K$, and $\bm m$, the minimum worst-case D2D delivery load assuming uncoded placement and linear delivery, $R^*_{\mathfrak{A},\mathfrak{D}}(\bm m) $, is characterized by the following linear program 
\begin{subequations} \label{eqn_opt}
\begin{align}
\textit{\textbf{O1}:}  \qquad  R^*_{\mathfrak{A},\mathfrak{D}}(\bm m) = \ &  \min_{\bm a,\bm u ,\bm v}  
& & \sum_{j=1}^{K} \sum_{\mc T \subsetneq_{\phi} [K] \setminus \{j\}} v_{j \rightarrow \mc T} \\
& \text{subject to}
& & \bm a \in \mathfrak{A}(\bm m), \\
& & & (\bm u, \bm v) \in \mathfrak D(\bm a),
\end{align}
\end{subequations}
where $\mathfrak{A}(\bm m) $ is the set of uncoded placement schemes defined in (\ref{eqn_feas_alloc}) and $\mathfrak D(\bm a) $ is the set of feasible linear delivery schemes defined by (\ref{eqn_dlv_cnst1})-(\ref{eqn_dlv_cnst5}).
\end{theorem}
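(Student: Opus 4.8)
The plan is to establish the claimed characterization by proving the two matching inequalities $R^*_{\mathfrak{A},\mathfrak{D}}(\bm m) \le \text{val}(\textit{\textbf{O1}})$ and $R^*_{\mathfrak{A},\mathfrak{D}}(\bm m) \ge \text{val}(\textit{\textbf{O1}})$, i.e.\ by exhibiting a load-preserving correspondence between caching schemes in the class $(\mathfrak{A},\mathfrak{D})$ and feasible points of the program. The structural fact that drives both directions is that uncoded placement forces every bit of every file $W_n$ to be cached by a well-defined subset of users, so each file decomposes into subfiles $W_{n,\mc S}$ indexed by the caching set $\mc S \subsetneq_{\phi} [K]$, and the placement is fully described by the fractions $a_{\mc S}=|W_{n,\mc S}|/F$. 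Dually, a linear delivery policy builds each D2D signal $X_{j\rightarrow\mc T,\bm d}$ as an XOR of such subfiles, so its length is captured by $v_{j\rightarrow\mc T}$, with the auxiliary variables $\bm u$ parameterizing how the subfiles are grouped into the transmitted XORs.

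For the achievability inequality $\le$, I would start from an optimal solution $(\bm a,\bm u,\bm v)$ of \textit{\textbf{O1}} and construct an explicit scheme. In placement I split each $W_n$ into subfiles of sizes dictated by $\bm a$ and store $W_{n,\mc S}$ in the caches of the users in $\mc S$; feasibility $\bm a\in\mathfrak{A}(\bm m)$, i.e.\ (\ref{eqn_feas_alloc}), yields $\sum_{\mc S\ni k} a_{\mc S}\le m_k$, so the constraint $|Z_k|\le M_k F$ holds. In delivery, I fix the worst-case demand $\bm d$, which for $N\ge K$ can be taken with distinct entries, and for each transmitter $j$ and receiver set $\mc T$ I form the XOR of the subfiles prescribed by $(\bm u,\bm v)$ so that every $k\in\mc T$ can cancel the undesired terms using $Z_k$ and recover its missing part of $W_{d_k}$. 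The delivery constraints (\ref{eqn_dlv_cnst1})--(\ref{eqn_dlv_cnst5}) are exactly what make this cancellation possible and what guarantee that every missing subfile of every user is covered; the resulting worst-case load equals $\sum_{j}\sum_{\mc T} v_{j\rightarrow\mc T}$.

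For the converse inequality $\ge$, I would take an arbitrary scheme in $(\mathfrak{A},\mathfrak{D})$ and extract its parameters. A symmetrization step comes first: by averaging the scheme over all permutations of the library and over those user permutations consistent with the ordering $m_1\le\cdots\le m_K$, I obtain a symmetric scheme whose subfile sizes depend only on $\mc S$ and whose worst-case load is no larger, using convexity of the feasible region and linearity of the objective. The symmetrized placement then gives variables $\bm a$ satisfying (\ref{eqn_feas_alloc}), while correct linear decoding of the signals forces the induced $(\bm u,\bm v)$ to satisfy (\ref{eqn_dlv_cnst1})--(\ref{eqn_dlv_cnst5}); hence the scheme maps to a feasible point of \textit{\textbf{O1}} whose objective lower-bounds its worst-case load, which closes the argument.

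The hard part will be the tight characterization of the linear-delivery constraints, namely showing that (\ref{eqn_dlv_cnst1})--(\ref{eqn_dlv_cnst5}) are simultaneously necessary and sufficient for decodability. Sufficiency requires exhibiting, for any feasible $(\bm u,\bm v)$, an explicit assignment of subfile fragments to XOR transmissions so that each receiver in $\mc T$ already caches every combined fragment except its own desired one and so that the transmitter $j$ indeed holds all fragments it XORs; the bookkeeping that each desired fragment is delivered exactly once across the many transmitter/receiver configurations induced by unequal cache sizes is delicate. Necessity requires arguing that a user can linearly recover a desired bit only if that bit is either cached or delivered inside a cancelable XOR, which reduces to a coverage/rank argument over the subfile types. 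The symmetrization in the converse is the secondary subtlety, since the asymmetry of the cache sizes restricts which user permutations preserve feasibility and must be handled by permuting only within groups of equal $m_k$ while retaining a valid averaging bound.
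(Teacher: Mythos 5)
Your plan is essentially the paper's own proof: Section \ref{sec_cach} establishes Theorem \ref{thm_gen_achv} precisely by this load-preserving parameterization, reducing uncoded placement to the subfile fractions $a_{\mc S}$ subject to (\ref{eqn_feas_alloc}) and linear XOR delivery to the assignment/transmission variables $(\bm u,\bm v)$ subject to the structural, redundancy, and completion constraints (\ref{eqn_dlv_cnst1})--(\ref{eqn_dlv_cnst5}), so that the worst-case load is $\sum_{j}\sum_{\mc T} v_{j\rightarrow\mc T}$. The one deviation is your proposed averaging over user permutations in the converse, which is unnecessary here (the variables $a_{\mc S}$ and $u_{\mc S}^{j\rightarrow\mc T}$ are already fully user-asymmetric, so only the standard file-symmetrization enabled by $N\geq K$ is needed).
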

\begin{proof}
Proof is provided in Section \ref{sec_cach}.
\end{proof}
%
Motivated by the lower bounds on server-based delivery in \cite{wan2016optimality,wan2017novel,yu2018exact}, we next establish that the minimum D2D delivery load memory trade-off with uncoded placement, $R^*_{\mathfrak{A}}(\bm m) $, is lower bounded by the linear program defined in Theorem \ref{thm_bound_genie}.
\begin{theorem}\label{thm_bound_genie} Given $N \geq K$, and $\bm m$, the minimum worst-case D2D delivery load with uncoded placement, $R^*_{\mathfrak{A}}(\bm m)$, is lower bounded by 
\begin{subequations} \label{eqn_bound_genie}
	\begin{align}
\textit{\textbf{O2}:}  \   & \max_{\lambda_{0} \in \mathbb{R},\lambda_{k} \geq 0,\alpha_{\bm q} \geq 0}  
	& &  -\lambda_{0} - \sum_{k=1}^{K} m_k \lambda_{k} \\
	& \text{subject to}
	& &   \! \! \! \! \! \! \! \! \! \! \! \! \lambda_{0} + \sum_{k \in \mc S} \lambda_{k} +  \gamma_{\mc S} \geq 0,  \forall \ \!\mc S  \subsetneq_{\phi} [K],\\
	& & &  \! \! \! \! \! \! \! \! \! \! \! \! \sum\limits_{\bm q \in  \mc P_{[K]\setminus\{j\}} } \alpha_{\bm q} =1, \ \! \forall \ \! j \in [K] ,  
	\end{align}
\end{subequations}
where $ \mc P_{[K]\setminus\{j\}} $ is the set of all permutations of the users in $[K] \setminus \{j\}$, and 
\begin{align}\label{eqn_gamma}
\gamma_{\mc S} \triangleq \begin{cases} K-1, \text{ for } |\mc S| =1,\\
\min\limits_{j \in \mc S} \bigg\{ \sum\limits_{i=1}^{K-|\mc S|}  \sum\limits_{\substack{\bm q \in  \mc P_{[K]\setminus\{j\}}   : \ \!  q_{i+1}  \in \mc S, \\ \{ q_1,\dots,q_{i} \} \cap \mc S=\phi } }   \! i \! \ \alpha_{\bm q} \bigg\}, \text{ for } 2 \leq \! |\mc S| \! \leq K \! - \! 1 \\
0, \text{ for } \mc S =[K].
\end{cases}
\end{align}
\end{theorem}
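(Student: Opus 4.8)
The plan is to adapt the genie-aided / acyclic index-coding converse of \cite{yu2018exact,wan2016optimality,wan2017novel} to the D2D setting and then close the argument by weak LP duality, so that no primal optimum need be exhibited. First I would fix an arbitrary uncoded placement in $\mathfrak A$ together with any delivery policy and split each file into the subfiles $W_{n,\mc A}$ made of the bits cached by exactly the users in $\mc A$. By the standard symmetrization over the file index and over the worst-case demand vectors with distinct requests, the placement is summarized by the aggregate normalized sizes $x_{\mc A}$ of the subfiles cached by exactly $\mc A$. These satisfy the file-completeness constraint $\sum_{\mc A} x_{\mc A}=1$, the per-user memory constraints $\sum_{\mc A \ni k} x_{\mc A} \le m_k$, and the D2D feasibility $x_{\phi}=0$ (with no server, uncached content cannot be supplied); these are exactly the relations defining $\mathfrak A(\bm m)$.

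The core step, and the main obstacle, is the information-theoretic bound itself. For a fixed reference user $j$ and a permutation $\bm q$ of the remaining users $[K]\setminus\{j\}$, I would lower bound the total load $R$ by a sequential genie argument on the aggregate of the D2D signals $\{X_{i \rightarrow \mc T,\bm d}\}$: reveal the caches $Z_{q_1},Z_{q_2},\dots$ in the order $\bm q$ and, at each step, charge to the transmissions the subfiles that become newly decodable. What distinguishes the D2D case from the server case of \cite{yu2018exact} is that every signal $X_{i \rightarrow \mc T,\bm d}$ is a function of the sender's cache $Z_i$ alone and no entity holds the whole library; this is precisely what removes the server's ``$+1$'' and ties the coefficient of $x_{\mc S}$ to the first appearance, along $\bm q$, of a cacher of that subfile other than $j$. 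Pinning this coefficient down, together with the degenerate singleton case (no multicast gain, coefficient $K-1$) and the all-cached case (coefficient $0$), is where the real work lies; the consistency check is that, under uniform permutation weights and a symmetric placement, the coefficient must collapse to $(K-|\mc S|)/|\mc S|$, reproducing the equal-cache D2D load.

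Next I would convexify. Introducing, for each reference user $j$, a distribution $\alpha_{\bm q}\ge 0$ over $\mc P_{[K]\setminus\{j\}}$ with $\sum_{\bm q}\alpha_{\bm q}=1$ and averaging the per-permutation bounds turns the coefficient of $x_{\mc S}$ into exactly the inner double sum of (\ref{eqn_gamma}), i.e. the $\alpha$-weighted count of users preceding the first member of $\mc S\setminus\{j\}$ in $\bm q$. Since the argument is valid for every admissible reference user $j\in\mc S$, the coefficient guaranteed simultaneously for all subsets is the worst one, $\gamma_{\mc S}=\min_{j\in\mc S}\{\cdots\}$, which yields the single inequality $R\ge\sum_{\mc S}\gamma_{\mc S}\,x_{\mc S}$ for the given scheme.

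Finally I would invoke weak duality. Fix any $(\lambda_0,\{\lambda_k\},\{\alpha_{\bm q}\})$ feasible for \textbf{O2}. Each constraint gives $\gamma_{\mc S}\ge-\lambda_0-\sum_{k\in\mc S}\lambda_k$, so, using $x_{\mc S}\ge 0$, the previous inequality telescopes as $R\ge\sum_{\mc S}\gamma_{\mc S}x_{\mc S}\ge-\lambda_0\sum_{\mc S}x_{\mc S}-\sum_{k}\lambda_k\sum_{\mc S\ni k}x_{\mc S}\ge-\lambda_0-\sum_{k}m_k\lambda_k$, where the last step uses the completeness constraint $\sum_{\mc S}x_{\mc S}=1$ (with $x_{\phi}=0$ and $\gamma_{[K]}=0$ handling the boundary sets), $\lambda_k\ge 0$, and $\sum_{\mc S\ni k}x_{\mc S}\le m_k$. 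Taking the supremum over all feasible $(\lambda_0,\{\lambda_k\},\{\alpha_{\bm q}\})$ gives $R^*_{\mathfrak A}(\bm m)\ge\textbf{O2}$, as claimed. I expect the delicate points to be the exact form of the genie coefficient in the second step and, secondarily, making the symmetrization and the ``simultaneous over $j\in\mc S$'' justification of the $\min$ fully rigorous.
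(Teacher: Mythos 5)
Your overall blueprint matches the paper's: per-permutation acyclic index-coding (genie) bounds, convexification with the weights $\alpha_{\bm q}$, and a final passage to the dual LP (your closing weak-duality chain is exactly the paper's dual of the allocation LP, and that part is fine). The gap is in the core step. You propose to run a single genie chain, for a fixed ``reference user'' $j$, on the \emph{aggregate} of the D2D signals and bound the \emph{total} load $R$. But the aggregate of all D2D transmissions is a function of all the caches jointly, so a genie chain on it can only reproduce the server-type bound --- for a subfile cached at $\mc S$ the coefficient $(K-|\mc S|)/(|\mc S|+1)$ under uniform weights, rather than the D2D coefficient $(K-|\mc S|)/|\mc S|$. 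You correctly observe that the improvement must come from each $X_{i\to\mc T}$ being a function of $Z_i$ alone, but your sketch never says how this is exploited. The paper's mechanism is a decomposition: write $R=\sum_j R^{(j)}$ with $R^{(j)}$ the load of transmitter $j$, split each subfile $\tilde W_{\cdot,\mc S}$ into pieces of sizes $a^{(j)}_{\mc S}$, $j\in\mc S$, according to which user delivers them, and apply the acyclic bound to the $j$th index-coding problem (server $=$ user $j$, receivers $=[K]\setminus\{j\}$, messages $=$ the pieces with superscript $(j)$). Because only $|\mc S|-1$ of the $K-1$ receivers cache such a piece, the denominator $|\mc S|$ appears.

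Relatedly, your justification of the $\min_{j\in\mc S}$ --- ``the argument is valid for every admissible reference user $j\in\mc S$, so the coefficient guaranteed simultaneously is the worst one'' --- is not the right logic: each per-$j$ bound covers only the subsets containing that $j$, and merging them by a per-$\mc S$ choice of $j$ would add up left-hand sides rather than keep a single $R$. In the paper the min arises for a concrete reason: summing the per-transmitter bounds gives $R\ge\sum_{\mc S}\sum_{j\in\mc S}c^{(j)}_{\mc S}a^{(j)}_{\mc S}$, and the adversarial scheme may concentrate each subfile on the transmitter $j\in\mc S$ with the smallest coefficient; minimizing the resulting LP over the partition variables $\{a^{(j)}_{\mc S}\}_{j\in\mc S}$ (equivalently, writing one dual feasibility constraint per pair $(\mc S,j)$) is what produces $\gamma_{\mc S}=\min_{j\in\mc S}\{\cdots\}$. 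Without introducing the per-transmitter decomposition of both the load and the subfiles, the step from your per-$j$ bounds to $R\ge\sum_{\mc S}\gamma_{\mc S}x_{\mc S}$ does not go through.
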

\vspace{-0.3in}
\begin{proof}
The proof is detailed in Section \ref{proof_thm_genie}.
\end{proof}

\subsection{Explicit Characterization Results}\label{sec_results}
Next, using Theorems \ref{thm_gen_achv} and \ref{thm_bound_genie}, we characterize the trade-off explicitly for several cases, which are illustrated in Table \ref{tab_main_results}. In particular, for these cases we show the optimality of linear delivery for uncoded placement, i.e., we show that $R^*_{\mathfrak{A}}(\bm m)= R^*_{\mathfrak{A},\mathfrak{D}}(\bm m)$. 
First, using Theorem \ref{thm_bound_genie}, we show the optimality of the centralized caching scheme proposed in \cite{ji2016fundamental} with uncoded placement for systems where the users have equal cache sizes.

\begin{theorem}\label{thm_equalcache}
For $N \geq K$, and $m_k=m=t/K, t \in [K] $, $\forall k \in [K]$, the minimum worst-case D2D delivery load with uncoded placement, $ R^*_{\mathfrak{A}}(m)= (1\! - \! m)/m$. In general, we have 
\begin{align}\label{eqn_thm_equalcache}
R^*_{\mathfrak{A}}(m)=\left(\dfrac{K-t}{t}\right)\big( t \! + \! 1 \! - \! K m\big)+\left(\dfrac{K-t-1}{t+1}\right)\big( K m-t\big),   
\end{align}
where $t \in [K \!- \!1]$ and $t \leq K m \leq t \! + \!1 $.
\end{theorem}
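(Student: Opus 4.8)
The plan is to establish the claimed expression as the common value of a matching achievability bound and the converse furnished by program \textbf{O2} of Theorem~\ref{thm_bound_genie}. Since \textbf{O2} is a maximization, its value dominates the objective at any single feasible point, so it suffices to exhibit one feasible dual vector $(\lambda_0,\{\lambda_k\},\{\alpha_{\bm q}\})$ whose objective equals the right-hand side of \eqref{eqn_thm_equalcache}, and separately to exhibit one scheme in $(\mathfrak A,\mathfrak D)$ attaining the same load. The two then sandwich $R^*_{\mathfrak A}(m)$, and incidentally prove $R^*_{\mathfrak A}=R^*_{\mathfrak A,\mathfrak D}$ here.

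For achievability I would invoke the symmetric D2D scheme of \cite{ji2016fundamental}: at $m=t/K$, split each file into $\binom{K}{t}$ subfiles indexed by the $t$-subsets, cache $W_{n,\mc S}$ at every user in $\mc S$, and for each $(t+1)$-subset let its members exchange equally-split XORed pieces. Counting transmissions gives a total load $\binom{K}{t+1}(t+1)/\big(t\binom{K}{t}\big)=(K-t)/t=(1-m)/m$, realized by a linear delivery policy, so $R^*_{\mathfrak A}\le R^*_{\mathfrak A,\mathfrak D}\le (1-m)/m$. For non-integer $Km$ with $t\le Km\le t+1$, memory-sharing---applying the $t$-scheme to a fraction $t+1-Km$ of each file and the $(t+1)$-scheme to the remaining fraction $Km-t$---produces exactly the convex combination in \eqref{eqn_thm_equalcache}.

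For the converse I would exploit the symmetry of equal caches and try the symmetric point $\alpha_{\bm q}=1/(K-1)!$ for every permutation (which meets the normalization $\sum_{\bm q}\alpha_{\bm q}=1$, as $|\mc P_{[K]\setminus\{j\}}|=(K-1)!$) together with $\lambda_k=\lambda$ for all $k$. The first task is to evaluate $\gamma_{\mc S}$ under this choice. The nested sum in \eqref{eqn_gamma} is the $\alpha$-weighted average, over a uniformly random ordering of $[K]\setminus\{j\}$, of the number $i$ of non-$\mc S$ elements preceding the first $\mc S$-element; with $|\mc S|-1$ elements of $\mc S$ and $K-|\mc S|$ elements outside $\mc S$ in that ordering, a waiting-time argument (each outside element precedes all $\mc S$-elements with probability $1/|\mc S|$) gives expectation $(K-|\mc S|)/|\mc S|$, independent of $j$. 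Hence $\gamma_{\mc S}=(K-|\mc S|)/|\mc S|$ for all $\mc S$, a single formula that also reproduces the boundary cases $|\mc S|=1$ and $|\mc S|=K$ in \eqref{eqn_gamma}.

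With this, the sign constraints of \textbf{O2} collapse to $\lambda_0+s\lambda+(K-s)/s\ge 0$ for $s=1,\dots,K-1$, and after driving $\lambda_0$ to its smallest feasible value the objective reduces to $\min_{s\in[K-1]}\big(s\lambda+(K-s)/s\big)-\lambda K m$ over $\lambda\ge 0$. Since $s\lambda+(K-s)/s$ is convex in $s$, I would pick $\lambda=K/\big(t(t+1)\big)\ge 0$, for which the integer minimizers are exactly $s=t$ and $s=t+1$; substituting and simplifying yields precisely $\tfrac{K-t}{t}(t+1-Km)+\tfrac{K-t-1}{t+1}(Km-t)$, and at $m=t/K$ the $\lambda$-terms cancel to leave $(K-t)/t$. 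This gives $R^*_{\mathfrak A}(m)\ge$ the claimed value and closes the sandwich. The main obstacle is precisely the $\gamma_{\mc S}$ evaluation: recognizing the doubly-indexed constrained sum as an expected waiting time that collapses to the clean form $(K-|\mc S|)/|\mc S|$, and then verifying that the proposed symmetric dual is feasible for \emph{every} $\mc S$ simultaneously, not merely for the active sets $s\in\{t,t+1\}$; everything afterward is an elementary two-variable optimization.
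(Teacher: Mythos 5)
Your proposal is correct and follows essentially the same route as the paper: achievability via the scheme of \cite{ji2016fundamental} with memory sharing, and a converse obtained by substituting the symmetric dual point $\alpha_{\bm q}=1/(K-1)!$ and $\lambda_k=\lambda$ into Theorem~\ref{thm_bound_genie}, reducing to $\max_{\lambda\ge 0}\min_{l}\{(K-l)/l+\lambda(l-Km)\}$. The only (cosmetic) difference is that you evaluate $\gamma_{\mc S}=(K-|\mc S|)/|\mc S|$ by an expected-waiting-time symmetry argument where the paper counts permutations directly; both give the same identity and the remaining optimization is identical.
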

\begin{proof}
\textbf{Achievability}: The centralized caching scheme proposed in \cite{ji2016fundamental} achieves (\ref{eqn_thm_equalcache}), which is also the optimal solution of (\ref{eqn_opt}). \textbf{Converse}: The proof is detailed in Section \ref{proof_thm_equalcache}.
\end{proof}

Next theorem shows that the heterogeneity in users cache sizes does not increase the achievable D2D delivery load as long as the smallest cache $m_1$ is large enough.

\begin{theorem}\label{thm_m1_lb}
For $N \geq K$, $m_1 \leq \dots \leq m_K$, and $m_1 \geq \dfrac{\sum_{k=2}^{K} m_k \! - \! 1 }{K \! - \! 2}$, the minimum worst-case D2D delivery load with uncoded placement,
\begin{align}\label{eqn_thm_m1_lb}
R^*_{\mathfrak{A}}(\bm m)=\left(\dfrac{K-t}{t}\right)\big( t \! + \! 1 \! - \! \sum_{k=1}^{K} m_k \big)+\left(\dfrac{K-t-1}{t+1}\right)\big( \sum_{k=1}^{K} m_k -t\big),   
\end{align}
where $t \leq \sum_{k=1}^{K} m_k \leq t \! + \!1$, and $\ t \in [K \!- \!1]$. 
\end{theorem}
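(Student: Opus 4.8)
My plan is to prove (\ref{eqn_thm_m1_lb}) by matching a converse obtained from Theorem \ref{thm_bound_genie} to an achievable scheme admissible in Theorem \ref{thm_gen_achv}, showing both collapse to the stated memory-sharing expression. Write $M\triangleq\sum_{k=1}^{K}m_k$ and note first that the hypothesis $m_1\geq(\sum_{k=2}^{K}m_k-1)/(K-2)$ is equivalent to $m_1\geq(M-1)/(K-1)$, i.e.\ (since $m_1$ is the smallest) every cache satisfies $m_k\geq(M-1)/(K-1)$. For the converse the decisive observation is that in program \textbf{O2} the vector $\bm m$ enters \emph{only} the objective $-\lambda_0-\sum_k m_k\lambda_k$: the feasible region for $(\lambda_0,\{\lambda_k\},\{\alpha_{\bm q}\})$, cut out by the $\gamma_{\mc S}$-constraints and the normalizations, is completely independent of $\bm m$ and is invariant under the action of $\mc P_{[K]}$. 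Hence I would take the optimal dual certificate produced in the equal-cache converse (Section \ref{proof_thm_equalcache}) for the symmetric system with common cache $M/K$ and symmetrize it over $\mc P_{[K]}$ so that all multipliers coincide, $\lambda_k=\lambda^\star$; averaging a feasible point over all permutations stays feasible (convexity plus permutation-invariance) and preserves the symmetric objective. Evaluating this still-feasible certificate at the heterogeneous $\bm m$ gives objective $-\lambda_0^\star-\lambda^\star\sum_k m_k=-\lambda_0^\star-\lambda^\star M$, which depends on $\bm m$ only through $M$ and equals the equal-cache optimum at total cache $M$, i.e.\ the right-hand side of (\ref{eqn_thm_m1_lb}). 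Thus $R^*_{\mathfrak A}(\bm m)$ is bounded below by the stated quantity for \emph{every} $\bm m$ with $t\leq M\leq t+1$, no constraint on $m_1$ being needed for this direction.

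For achievability the threshold becomes essential. I would first identify the relevant geometry: among sorted vectors with $\sum_k m_k=M$, the condition $m_1\geq(M-1)/(K-1)$ cuts out exactly the polytope $\{\bm m:\,m_k\geq(M-1)/(K-1)\ \forall k,\ \sum_k m_k=M\}$, whose vertices are the permutations of the corner profile $\bm c^\star=\big((M-1)/(K-1),\dots,(M-1)/(K-1),1\big)$, in which one user caches the entire library and each of the remaining $K-1$ users caches the fraction $(M-1)/(K-1)$; equivalently, $\bm m$ is majorized by $\bm c^\star$. The argument then has two ingredients. First, I would exhibit a scheme feasible in \textbf{O1} for the corner profile $\bm c^\star$ whose worst-case D2D load equals the right-hand side of (\ref{eqn_thm_m1_lb}). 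Second, running the relabeled corner scheme for each $\pi\in\mc P_{[K]}$ on a disjoint fraction of every file, with convex weights $\theta_\pi\geq0$, $\sum_\pi\theta_\pi=1$ chosen so that $\sum_\pi\theta_\pi\,\pi(\bm c^\star)=\bm m$ (possible precisely because $\bm m$ lies in the convex hull of these vertices), produces a placement in $\mathfrak A$ with per-user cache exactly $m_k$; since every constituent has the same load and loads superpose additively, the aggregate worst-case load is again the right-hand side. Together with the converse this yields $R^*_{\mathfrak A}(\bm m)=R^*_{\mathfrak A,\mathfrak D}(\bm m)$ equal to (\ref{eqn_thm_m1_lb}).

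The main obstacle is the first ingredient: constructing and analyzing the corner scheme, where one user is a universal helper and the other $K-1$ users sit symmetrically at the non-integer level $(M-1)/(K-1)$. I would build it by memory-sharing between the two integer placement levels $t$ and $t+1$ (splitting each file with weight $M-t$) and then assigning the helper's transmissions so that the coded multicasts within each $(t+1)$-subset stay size-balanced and incur no zero-padding overhead; verifying that the resulting worst-case load collapses \emph{exactly} to $\frac{K-t}{t}(t+1-M)+\frac{K-t-1}{t+1}(M-t)$ rather than to something larger caused by the cache asymmetry is the delicate step, and it is precisely where the values $1$ (full cache) and $(M-1)/(K-1)$ (the rest) are pinned down. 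As sanity checks before treating the general split, I would confirm the construction at the two boundaries $M=t$ and $M=t+1$, where it must reduce to the symmetric level-$t$ and level-$(t+1)$ D2D schemes, and at the extreme $M=1$, where the helper alone delivers the $K-1$ missing files and the load is $K-1$, matching the formula.
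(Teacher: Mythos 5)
Your converse is essentially the paper's converse: Section \ref{proof_conv_thm_m1_lb} also substitutes the fully symmetric weights $\alpha_{\bm q}=1/(K-1)!$ (exactly what your permutation-averaging of the equal-cache certificate produces, and the averaging is legitimate since each $\gamma_{\mc S}$ is a min of linear functions of $\bm \alpha$, hence the feasible set of \textbf{O2} is convex and permutation-invariant) and then takes all $\lambda_k$ equal, so the bound depends on $\bm m$ only through $\sum_k m_k$; your remark that no hypothesis on $m_1$ is needed for this direction is consistent with the paper. The achievability is where you genuinely diverge. The paper generalizes the scheme of \cite{ji2016fundamental} directly: it keeps storage levels $t$ and $t+1$ but lets the $|\mc S|$ pieces of each subfile have user-dependent sizes $\eta_j,\theta_j$, reduces the cache and completion constraints to the linear system (\ref{eqn_app_25})--(\ref{eqn_app_26}), and shows a non-negative solution exists precisely under $(K-2)m_1\geq\sum_{k=2}^{K}m_k-1$, so the threshold appears as feasibility of that system. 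You instead write $\bm m$ as a convex combination of the vertices of $\{\bm m:\,m_k\geq (M-1)/(K-1),\ \sum_k m_k=M\}$ (your vertex identification and the file-splitting superposition argument are both correct and stay inside $(\mathfrak A,\mathfrak D)$) and only need one corner scheme. That corner scheme is in fact easier than you make it: with $m_K=1$ the helper holds the whole library and the problem degenerates to the server-based Maddah-Ali--Niesen problem for $K-1$ equal caches of size $(M-1)/(K-1)$, whose memory-shared load is exactly $\frac{K-t}{t}(t+1-M)+\frac{K-t-1}{t+1}(M-t)$ --- a correspondence the paper itself records in Section \ref{sec_disc}. What your route buys is a conceptual reading of the threshold (it is precisely the condition that $\bm m$ be majorized by the single-helper profile, i.e.\ lie in that simplex); what the paper's route buys is an explicit one-shot placement without the $K!$-fold file splitting, and a scheme in which every user, not just one helper per constituent, transmits.
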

\begin{proof}
\textbf{Achievability}: In Section \ref{proof_ach_thm_m1_lb}, we generalize the caching scheme in \cite{ji2016fundamental} to accommodate the heterogeneity in cache sizes. \textbf{Converse}: The proof is detailed in Section \ref{proof_conv_thm_m1_lb}.
\end{proof}
The next theorem characterizes the trade-off in the small memory regime defined as the total network cache memory is less than twice the library size. 

\begin{table}[t]
\small
\centering
\caption{Summary of the analytical results on $R^*_{\mathfrak{A}}(\bm m) $.  }\label{tab_main_results}
\begin{tabular}{|c|c|c|c|c|c|}
  \hline
  Regions & $ \sum\limits_{j=1}^{K} \! m_j \! \in \! [1,2]$ & $\dots$ & $\sum\limits_{j=1}^{K} \! m_j \! \in \! [t,t \! + \!1]  $ & $\dots $ & $\sum\limits_{j=1}^{K} \! m_j \! \in \! [K \! - \!1,K]  $\\
  \hline  
$(K\!-\!2)m_1 \! \geq \!  \sum\limits_{j=2}^{K} \!  m_j \!-\!1 $ & Exact (\ref{eqn_thm_m1_lb}) & $\dots$ & Exact (\ref{eqn_thm_m1_lb}) & $\dots$ & Exact (\ref{eqn_thm_m1_lb})\\
  \hline  
$(K\!-\!2)m_1 \! < \!  \sum\limits_{j=2}^{K} \!  m_j \!-\!1  $ &  Exact (\ref{eqn_thm_mtot_2}) &  & Achievability (\ref{eqn_opt}) &  & Exact (\ref{eqn_thm_mtot_K}) \\ 
$(K\!-\!3)m_2 \! \geq \!  \sum\limits_{j=3}^{K} \!  m_j \!-\!1  $ &  &  & Lower bound (\ref{eqn_bound_genie})& &\\
  \cline{1-1}
$\vdots$&  & $\dots$ &  & $\dots$ & \\
  \cline{1-1}  
$ (K\!-l\!-\!1) m_{l} \! < \! \! \! \sum\limits_{j=l+1}^{K} \! \!  m_j \!-\!1 $&  &  &  &  & \\
$\! (K\!-l\!-\!2) m_{l+1} \! \geq \! \! \! \sum\limits_{j=l+2}^{K} \! \! \! m_j \!-\!1 $ &  &  &  & &\\ \cline{1-1}  
$\vdots$&  &  &  &  & \\
 \cline{1-1}
$m_{K-2}\! < \! m_{K-1} \! + \! m_{K} \!-\!1 $ & &  &  &  & \\
 \hline 
\end{tabular}
\normalsize
\end{table} 
\begin{theorem}\label{thm_mtot_2}
For $N \geq K$, $m_1 \leq \dots \leq m_K$, $ 1 \leq \! \sum_{k=1}^{K} m_k \! \leq 2$, $m_l < \dfrac{\sum_{i=l+1}^{K} m_i \! - \! 1 }{K \! - \! l- \! 1}$ and $m_{l+1} \geq \dfrac{\sum_{i=l+2}^{K} m_i \! - \! 1 }{K \! - \! l- \! 2}$, the minimum worst-case D2D delivery load with uncoded placement,
\begin{align}\label{eqn_thm_mtot_2}
R^*_{\mathfrak{A}}(\bm m)= \dfrac{3K-l-2}{2}- \sum_{i=1}^{l} (K \! - \! i) m_i - \left(\dfrac{K-l}{2} \right)\sum_{i=l+1}^{K} m_i,   
\end{align}
where $ l \in [K \!- \!2]$. 
\end{theorem}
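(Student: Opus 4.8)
The plan is to prove Theorem~\ref{thm_mtot_2} via matching achievability and converse arguments, exactly as the two-sided structure of the earlier theorems suggests. The hypotheses $m_l < \frac{\sum_{i=l+1}^K m_i - 1}{K-l-1}$ and $m_{l+1} \geq \frac{\sum_{i=l+2}^K m_i - 1}{K-l-2}$ identify a unique index $l$ that partitions the users into a ``small-cache'' group $\{1,\dots,l\}$ and a ``large-cache'' group $\{l+1,\dots,K\}$; the threshold conditions are precisely the boundary cases of the nested inequalities in Table~\ref{tab_main_results}, so the first step is to interpret $l$ as the crossover point where the homogenizing argument of Theorem~\ref{thm_m1_lb} breaks down. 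Since $\sum_k m_k \in [1,2]$, the relevant regime is $t=1$, so the target load in~\eqref{eqn_thm_mtot_2} should be read as the $t=1$ specialization of a more refined scheme in which the $l$ smallest caches cannot be ``filled up'' to the homogeneous level.

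\textbf{Achievability.} First I would exhibit a feasible point of the linear program \textbf{O1} in Theorem~\ref{thm_gen_achv} whose objective equals~\eqref{eqn_thm_mtot_2}. The natural construction splits each file into subfiles indexed by the subsets of $[K]$ that cache them; for the large-cache users I would use the standard $t=1$ (and partially $t=2$) Maddah-Ali--Niesen-style placement adapted to D2D as in~\cite{ji2016fundamental}, while the $l$ small-cache users contribute only the fraction of storage they can afford, forcing their requested content to be delivered by single-user (unicast-like) transmissions rather than fully multicast ones. The objective then decomposes into a baseline multicast cost among the $K-l$ large-cache users plus a linear penalty $\sum_{i=1}^l (K-i) m_i$ for each small cache, which I expect to match the coefficients in~\eqref{eqn_thm_mtot_2} after collecting terms. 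The bookkeeping here is routine once the placement is fixed, so I would verify the cache-size constraints $\bm a \in \mathfrak{A}(\bm m)$ and the decoding constraints $(\bm u,\bm v)\in\mathfrak{D}(\bm a)$ and then simply evaluate the sum of transmission variables.

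\textbf{Converse.} This is where I would invest the effort: I would construct an explicit dual-feasible point of the lower-bound program \textbf{O2} in Theorem~\ref{thm_bound_genie} achieving the same value. Concretely, I would choose the permutation weights $\alpha_{\bm q}$ supported on orderings that place the $l$ small-cache users last (so that they contribute the maximal index $i$ in the definition of $\gamma_{\mc S}$), and I would pick the multipliers $\lambda_0$ and $\lambda_k$ so that the constraint $\lambda_0 + \sum_{k\in\mc S}\lambda_k + \gamma_{\mc S}\geq 0$ is tight exactly on the subsets $\mc S$ that correspond to the coded transmissions used in the achievability scheme. Because $\sum_k m_k \le 2$, only $|\mc S|\in\{1,2\}$ subsets should bind, which keeps the $\gamma_{\mc S}$ evaluation tractable. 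The objective $-\lambda_0 - \sum_k m_k \lambda_k$ then reproduces the affine expression in~\eqref{eqn_thm_mtot_2}, with the two threshold hypotheses guaranteeing that this particular $\alpha$-assignment is simultaneously optimal for all the binding singleton constraints (those give the $\gamma_{\{j\}}=K-1$ term).

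\textbf{The main obstacle} will be the converse: verifying that the chosen $\alpha_{\bm q}$ make every constraint $\lambda_0+\sum_{k\in\mc S}\lambda_k+\gamma_{\mc S}\ge 0$ hold \emph{for all} $\mc S \subsetneq_\phi [K]$, not merely the binding ones, given that $\gamma_{\mc S}$ is itself a minimum over $j\in\mc S$ of a sum of $\alpha$-weights. I expect the proof to hinge on showing that the index-$l$ threshold conditions are exactly what is needed to ensure the candidate dual solution stays feasible as $\mc S$ ranges over subsets straddling the small/large-cache partition; I would handle this by a case analysis on $|\mc S \cap \{1,\dots,l\}|$ and invoke the monotonicity $m_1\le\cdots\le m_K$ to certify the sign of each slack. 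Closing the LP duality gap by exhibiting equal primal and dual objectives then yields $R^*_{\mathfrak A}(\bm m)=R^*_{\mathfrak A,\mathfrak D}(\bm m)$ equal to~\eqref{eqn_thm_mtot_2}, completing both the exactness and the optimality-of-linear-delivery claim.
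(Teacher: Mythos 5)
Your overall architecture is the same as the paper's: exhibit a primal-feasible point of \textbf{O1} and a dual-feasible point of \textbf{O2} with matching value $\frac{3K-l-2}{2}-\sum_{i=1}^{l}(K-i)m_i-\frac{K-l}{2}\sum_{i=l+1}^{K}m_i$. However, the one concrete technical commitment you make in the converse is backwards, and it is precisely the step on which the proof hinges. You propose supporting $\alpha_{\bm q}$ on orderings that place the small-cache users $\{1,\dots,l\}$ \emph{last} so that they ``contribute the maximal index $i$'' in $\gamma_{\mc S}$. The paper does the opposite: it sets $\alpha_{\bm q}=1$ for transmitters $j\in[l]$ on the single increasing order, and for transmitters $j\in\{l+1,\dots,K\}$ it puts all mass on $\bm q=[1,\dots,l,\bm x]$, i.e., small-cache users \emph{first}. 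To see why your choice fails, note that the achievable scheme stores the entire cache of each small user $j_0\in[l]$ in pair-subfiles $\{j_0,i_0\}$ with $i_0$ large, so $a_{\{j_0,i_0\}}>0$ and complementary slackness forces the dual constraint for $\mc S=\{j_0,i_0\}$ to be tight. With $\lambda_0=-\frac{3K-l-2}{2}$, $\lambda_{j_0}=K-j_0$, $\lambda_{i_0}=\frac{K-l}{2}$ this requires $\gamma_{\{j_0,i_0\}}=j_0-1$; in particular $\gamma_{\{1,i_0\}}=0$, which (taking the minimizing transmitter $i_0$) forces user $1$ to occupy position $q_1$ in \emph{every} permutation carrying positive weight. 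Placing the small-cache users last inflates $\gamma_{\mc S}$ on exactly these straddling pairs, making those constraints slack and preventing the dual objective from reaching the target value. Larger $\gamma_{\mc S}$ is not uniformly better: the bound is a trade-off across subsets, and the right trade-off here is small $\gamma$ on sets meeting $[l]$ and large $\gamma$ on sets inside $\{l+1,\dots,K\}$.

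A secondary inaccuracy: in the achievability sketch you say the small-cache users' content is delivered by ``single-user (unicast-like) transmissions rather than fully multicast ones.'' In the paper's scheme every user, including each $k\in[l]$, receives its pair-subfiles through order-two multicasts $v_{j\to\{k,i\}}=a_{\{k,j\}}$ sent by large-cache users; the unicasts carry the singleton subfiles $\tilde W_{\cdot,\{j\}}$ (cached only at large users) plus the residual pieces needed to complete each request. The distinguishing feature of the $l$th heterogeneity level is not unicast delivery to small users but that small users cache nothing exclusively and nothing jointly with other small users, which is what produces the individual coefficients $(K-i)m_i$ for $i\le l$ versus the aggregate coefficient $\frac{K-l}{2}$ for the rest. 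As written, your plan would need both of these corrections before the ``routine bookkeeping'' could go through.
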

\begin{proof}
\textbf{Achievability}: The caching scheme is provided in Section \ref{proof_ach_thm_mtot_2}. \textbf{Converse}: The proof is detailed in Section \ref{proof_conv_thm_mtot_2}.
\end{proof}
From (\ref{eqn_thm_mtot_2}), we observe that the trade-off in the $l$th heterogeneity level depends on the individual cache sizes of users $\{1,\dots,l\}$ and the total cache sizes of the remaining users. 
\begin{remark}
The trade-off in the region where $\sum_{k=1}^{K} m_k \! \leq 2$ and $(K \! - \! 2) m_1 \geq \sum_{i=2}^{K} m_i  -  1$, which is included in Theorem \ref{thm_m1_lb}, can also be obtained by substituting $l\!=\!0$ in Theorem \ref{thm_mtot_2}.
\end{remark}

The next theorem characterizes the trade-off in the large memory regime defined as one where the total network memory satisfies $\sum_{k=1}^{K} m_k \! \geq K\!-\!1$. In particular, we show the optimality of uncoded placement, i.e., $R^*_{\mathfrak{A}}(\bm m)=R^*(\bm m) $. 

\begin{theorem}\label{thm_mtot_K}
For $N \geq K$, $m_1 \leq \dots \leq m_K$, and $ \sum_{k=1}^{K} m_k \! \geq K \! - \! 1$, the minimum worst-case D2D delivery load with uncoded placement,
\begin{align}\label{eqn_thm_mtot_K}
R^*_{\mathfrak{A}}(\bm m)= R^*(\bm m)=1  - m_1,   
\end{align}
where $m_1 < \dfrac{\sum_{i=2}^{K} m_i \! - \! 1 }{K \! - \! 2}$.
\end{theorem}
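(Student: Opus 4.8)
The plan is to establish the chain
$1-m_1 \le R^*(\bm m) \le R^*_{\mathfrak A}(\bm m) \le R^*_{\mathfrak A,\mathfrak D}(\bm m) \le 1-m_1$,
which forces all four quantities to coincide. The two inner inequalities among the operational quantities are immediate from the definitions, since an uncoded--linear scheme is a particular uncoded scheme, which is in turn a particular scheme. Thus only two things require work: a converse $R^*(\bm m)\ge 1-m_1$ valid for \emph{arbitrary, possibly coded}, placement, and an explicit scheme in $(\mathfrak A,\mathfrak D)$ attaining load $1-m_1$. Because the claimed optimum is $R^*(\bm m)$ and not merely $R^*_{\mathfrak A}(\bm m)$, the uncoded-placement bound of Theorem~\ref{thm_bound_genie} does not suffice for the converse; instead I would use a direct cut-set/entropy argument that isolates the smallest-cache user.

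For the converse, fix any (coded) scheme, let $\bm d$ be a demand, and let $X$ be the collection of all D2D signals, of total length $R(\bm d)F$ bits, so $H(X)\le R(\bm d)F$. Since user $1$ reconstructs $W_{d_1}$ from $(X,Z_1)$, Fano's inequality gives $R(\bm d)F\ge H(X)\ge H(X\mid Z_1)\ge I(W_{d_1};X\mid Z_1)\ge H(W_{d_1}\mid Z_1)-\epsilon F$. Because the worst-case load upper-bounds the load of every demand, for each $n\in[N]$ (taking a demand with $d_1=n$) I get $R\,F\ge H(W_n\mid Z_1)-\epsilon F$; averaging over $n$ and using subadditivity, $\frac1N\sum_{n}H(W_n\mid Z_1)\ge \frac1N H(W_1,\dots,W_N\mid Z_1)=F-\frac1N I(W_{[N]};Z_1)\ge F-\frac1N H(Z_1)\ge (1-m_1)F$, where the last step uses $H(Z_1)\le M_1F=m_1NF$. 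Letting $\epsilon\to0$ yields $R^*(\bm m)\ge 1-m_1$. Notably this bound uses neither $\sum_k m_k\ge K-1$ nor the threshold on $m_1$; it is universal and merely becomes tight in this regime.

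For achievability I would take the placement that splits each file into a fraction $1-m_k$ stored by every user \emph{except} $k$ (the ``$k$-missing'' subpiece $W_n^{(k)}$) together with a fraction $\sum_j m_j-(K-1)\ge 0$ stored by all users; the regime hypothesis $\sum_j m_j\ge K-1$ is exactly what makes this nonnegative. One checks the occupancy of user $k$ is $\bigl(\sum_j m_j-(K-1)\bigr)+\sum_{j\neq k}(1-m_j)=m_k$ and the fractions sum to $1$, so this lies in $\mathfrak A(\bm m)$. After announcing demands, user $k$ lacks only its own subpiece $W_{d_k}^{(k)}$, of size $1-m_k\le 1-m_1$. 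Delivery covers user $1$'s deficit by transmissions that \emph{always include user $1$} as a recipient: each transmission XORs a subpiece of $W_{d_1}^{(1)}$ with subpieces of the deficits of up to $K-2$ other users, and is sent by any user outside the recipient set $\{1\}\cup T$ (such a transmitter exists since $|\{1\}\cup T|\le K-1$, and it holds all combined subpieces precisely because it is none of the intended recipients). Each recipient cancels the other summands from its cache and recovers its own subpiece. Since every transmitted bit is heard by user $1$, the total load equals user $1$'s deficit $1-m_1$, \emph{provided} every other user's deficit can be packed into these $(1-m_1)F$ bits.

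The crux, and the step I expect to be the main obstacle, is this packing. Each transmitted bit serves user $1$ plus at most $K-2$ others, so the available ``side capacity'' is $(K-2)(1-m_1)F$; packing the remaining deficits requires $\sum_{k\ge2}(1-m_k)\le (K-2)(1-m_1)$ together with $1-m_k\le 1-m_1$ for each $k$. The per-user condition is immediate from $m_k\ge m_1$, and the aggregate condition rearranges \emph{exactly} into the hypothesis $m_1< \frac{\sum_{k\ge2}m_k-1}{K-2}$. To make this rigorous for every worst-case demand I would formalize it as a feasibility problem — positions $1,\dots,(1-m_1)F$, each of capacity $K-2$, with user $k$ demanding $(1-m_k)F$ positions — whose solvability is guaranteed precisely by the above two conditions via a max-flow/defect-Hall argument; equivalently, one exhibits the corresponding feasible point of the linear program in Theorem~\ref{thm_gen_achv}. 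Combining the achievable load $1-m_1$ with the converse closes the chain and gives $R^*_{\mathfrak A}(\bm m)=R^*(\bm m)=1-m_1$.
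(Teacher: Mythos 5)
Your proposal is correct and matches the paper's proof in all essentials: the same placement ($a_{[K]\setminus\{k\}}=1-m_k$ and $a_{[K]}=\sum_j m_j-(K-1)$, feasible precisely because $\sum_j m_j\ge K-1$), a single-user cut-set converse isolating user $1$ (the paper simply cites the cut-set bound of \cite{ji2016fundamental}, while you spell it out via Fano and $H(Z_1)\le m_1NF$), and a delivery phase in which every transmission is received by user $1$, so the total load equals user $1$'s deficit $1-m_1$. The only difference is in how the packing of the other users' deficits into those $(1-m_1)F$ transmitted bits is realized: the paper gives explicit closed-form transmission sizes via nested multicasts $X_{K\rightarrow [i]}$, $i\in[l]$, together with $X_{j\rightarrow [K]\setminus\{j\}}$ for $j>l$ (indexed by the heterogeneity level $l$), whereas you invoke a capacity/defect-Hall existence argument whose feasibility conditions rearrange exactly into $(K-2)m_1\le \sum_{i=2}^{K}m_i-1$ and $m_k\ge m_1$ — both routes are valid.
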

\begin{proof}
\textbf{Achievability}: The caching scheme is provided in Section \ref{proof_ach_thm_mtot_K}. \textbf{Converse}: The proof follows from the cut-set bound in \cite{ji2016fundamental}.
\end{proof}
Finally, for $K=3$, we have the complete characterization below.
\begin{theorem}\label{thm_3ue}
For $K=3$, $N \geq 3$, and $m_1 \leq m_2 \leq m_3$, the minimum worst-case D2D delivery load with uncoded placement,
\begin{align}\label{eqn_thm_3ue}
R^*_{\mathfrak{A}}(\bm m)= \max  \left\lbrace \! \frac{7}{2} \! - \!  \frac{3}{2} \big( m_1 \! + \! m_2 \! + \! m_3\big),   3 \! - \! 2 m_1 \! - \! m_2 \! - \! m_3,   \frac{3}{2} \! - \!  \frac{1}{2} \big( m_1 \! + \! m_2 \! + \! m_3\big),   1 \! - \! m_1 \! \right\rbrace \! \! .
\end{align}
\end{theorem}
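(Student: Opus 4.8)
The plan is to obtain the three-user characterization not through a fresh achievability/converse pair, but by \emph{assembling} the exact results already established for general $K$ and then verifying that the resulting piecewise-linear function coincides with the stated maximum of four affine terms. Write $M \triangleq m_1 + m_2 + m_3$ and abbreviate the four candidates as
\begin{align*}
R_1 &= \frac{7}{2} - \frac{3}{2}M, & R_2 &= 3 - 2m_1 - m_2 - m_3, \\
R_3 &= \frac{3}{2} - \frac{1}{2}M, & R_4 &= 1 - m_1.
\end{align*}
The first observation is that for $K=3$ the three explicit theorems already tile the entire feasible region $\{m_1 \leq m_2 \leq m_3,\ M \geq 1\}$. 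Theorem \ref{thm_m1_lb} governs the \emph{homogeneous-like} regime, since its hypothesis $(K-2)m_1 \geq \sum_{k=2}^{K} m_k - 1$ becomes exactly $m_1 \geq m_2 + m_3 - 1$; in the complementary \emph{heterogeneous} regime $m_1 < m_2 + m_3 - 1$ one invokes Theorem \ref{thm_mtot_2} with $l=1$ when $1 \leq M \leq 2$, and Theorem \ref{thm_mtot_K} when $M \geq 2$. Thus both bounds are inherited, and nothing new must be constructed.

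Next I would identify each theorem's output with one of the four terms. Specializing Theorem \ref{thm_m1_lb} to $K=3$ returns $R_1$ on the slab $t=1$ (equivalently $1 \leq M \leq 2$) and $R_3$ on $t=2$ (equivalently $2 \leq M \leq 3$); Theorem \ref{thm_mtot_2} with $K=3,\ l=1$ evaluates to precisely $R_2$; and Theorem \ref{thm_mtot_K} evaluates to precisely $R_4$. Hence the two hyperplanes $M=2$ and $m_1 = m_2 + m_3 - 1$ cut the domain into four pieces, on which the true load equals $R_1,R_2,R_3,R_4$ respectively.

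The remaining content is purely algebraic: on each piece the associated $R_i$ must be shown to be the \emph{largest} of the four, and pleasingly the comparisons collapse onto exactly the two partitioning conditions. I would record the linear identities
\begin{align*}
R_1 - R_3 &= 2 - M = R_2 - R_4, \\
R_1 - R_2 &= m_1 - \frac{M-1}{2} = R_3 - R_4,
\end{align*}
from which $R_1 \geq R_3$ and $R_2 \geq R_4$ hold iff $M \leq 2$, while $R_1 \geq R_2$ and $R_3 \geq R_4$ hold iff $m_1 \geq m_2 + m_3 - 1$. A four-line case analysis over the sign patterns then finishes it: for instance, when $M \leq 2$ and $m_1 < m_2 + m_3 - 1$ one gets $R_2 > R_1 \geq R_3$ together with $R_2 \geq R_4$, so $R_2$ dominates, matching Theorem \ref{thm_mtot_2}; the other three pieces are symmetric.

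I expect the principal obstacle to be administrative rather than conceptual: checking that the degenerate boundary cases of the cited theorems specialize correctly at $K=3$. Concretely, for $K=3$ the single heterogeneity level $l=1$ makes the secondary threshold $m_{l+1} \geq (\sum_{i=l+2}^{K} m_i - 1)/(K-l-2)$ in Theorem \ref{thm_mtot_2} degenerate through a zero denominator, so I must argue that $l=1$ is the terminal level and that no further refinement of the heterogeneous regime occurs. Once that is settled, continuity of all four affine pieces across $M=2$ and across $m_1 = m_2 + m_3 - 1$ ensures the piecewise formula and the maximum agree on the shared faces, yielding $R^*_{\mathfrak{A}}(\bm m) = \max\{R_1,R_2,R_3,R_4\}$ on the whole domain.
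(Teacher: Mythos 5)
Your proposal is correct, and it takes a genuinely different route from the paper. The paper proves Theorem \ref{thm_3ue} from scratch: Appendix \ref{app_thm_3ue_ach} constructs four explicit caching schemes (one per region of the same partition you identify, cut by $M=2$ and $m_1 = m_2+m_3-1$), and Appendix \ref{app_thm_3ue_conv} obtains the converse by substituting two specific choices of $\alpha_{\bm q}$ (uniform, and $\alpha_{[2,3]}=\alpha_{[1,3]}=\alpha_{[1,2]}=1$) into the linear program of Theorem \ref{thm_bound_genie} and exhibiting dual-feasible points. You instead treat the three-user case as a corollary of Theorems \ref{thm_m1_lb}, \ref{thm_mtot_2}, and \ref{thm_mtot_K}, which is legitimate since those are proved independently of Theorem \ref{thm_3ue}: your specializations check out ($t=1,2$ in (\ref{eqn_thm_m1_lb}) give $R_1,R_3$; $K=3,l=1$ in (\ref{eqn_thm_mtot_2}) gives $R_2$; (\ref{eqn_thm_mtot_K}) gives $R_4$), the four regions tile the domain, and your two difference identities correctly reduce the max-verification to the two partitioning hyperplanes. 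What your approach buys is economy — no new schemes or dual points are needed, and the max structure of (\ref{eqn_thm_3ue}) is explained rather than just verified. What it costs is exactly the administrative point you flag: the secondary hypothesis of Theorem \ref{thm_mtot_2} degenerates at $K=3,\ l=1$ (zero denominator), and you must read it in the multiplied-out form $(K-l-2)m_{l+1} \geq \sum_{i=l+2}^{K} m_i - 1$, i.e., $0 \geq m_3 - 1$, which holds since $m_3 \leq 1$ without loss of generality; with that reading $l=1$ is indeed the only (terminal) heterogeneity level and the region is fully covered. The paper's standalone proof avoids any reliance on this degenerate specialization, which is presumably why the authors wrote it separately.
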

\begin{proof}
\textbf{Achievability}: The proof is in Appendix \ref{app_thm_3ue_ach}. \textbf{Converse}: The proof is in Appendix \ref{app_thm_3ue_conv}.
\end{proof}

\vspace{-.08 in}
\section{General Caching Scheme}\label{sec_cach}
In the placement phase, we consider all feasible uncoded placement schemes in which the whole library can be retrieved utilizing the users' cache memories via D2D delivery, i.e., there must be no subfile stored at the server that is not placed in the end nodes in pieces. The delivery phase consists of $K$ transmission stages, in each of which one of the $K$ users acts as a ``server''. In particular, in the $j$th transmission stage, user $j$ transmits the signals $X_{j \rightarrow \mc T}$ to the users in the sets $\mc T \subsetneq_{\phi} [K] \setminus \{j\}$. \footnote{For convenience, we omit the subscript $ \bm d$ from $X_{j \rightarrow \mc T, \bm d}$ whenever the context is clear.} 
%

\subsection{Placement phase}
The server partitions each file $W_{n}$ into $2^K \! - \! 1$ subfiles, $\tilde W_{n,\mc S}, \mc S \subsetneq_{\phi} [K]$, such that $\tilde W_{n,\mc S}$ denotes a subset of $W_{n}$ which is stored exclusively at the users in the set $\mc S $. The partitioning is symmetric over the files, i.e., $|\tilde W_{n,\mc S}|= a_{\mc S} F \text{ bits}, \forall n \in [N]$, where the \textit{allocation variable} $a_{\mc S} \in [0,1]$ defines the size of $\tilde W_{n,\mc S}$ as a fraction of the file size $F$. Therefore, the set of feasible uncoded placement schemes, $\mathfrak A(\bm m)$, is defined by

\begin{equation}\label{eqn_feas_alloc}
 \mathfrak A(\bm m) = \Bigg\lbrace  \bm a \in [0,1]^{2^K} \Big\vert \sum\limits_{\mc S \subsetneq_{\phi}[K] } \! \! \! \! a_{\mc S }=1,  \! \! \! \! \! \! \! \! \sum\limits_{\mc S \subset [K]  \ \! : \ \! k \in \mc S } \! \! \! \! \! \! \! \! a_{\mc S } \leq m_k, \forall k \in [K] \Bigg\rbrace,\!\!
\end{equation}
where the allocation vector $\bm a $ consists of the allocation variables $a_{\mc S}, \mc S \subsetneq_{\phi}[K] $, the first constraint follows from the fact the whole library can be reconstructed from the users' cache memories, and the second represents the cache size constraint at user $k$. More specifically, user $k$ cache content is defined as
\vspace{-0.1in}
\begin{align}
Z_k = \bigcup\limits_{n \in [N]} \ \bigcup\limits_{\mc S \subset [K] \ \! : \ \! k \in \mc S } \tilde W_{n,\mc S}.
\end{align}

Next, we explain the delivery scheme for a three-user system for clarity of exposition, then we generalize to $K>3$.
\vspace{-0.07in}
\subsection{Delivery phase: Three-user system}
\subsubsection{Structure of $X_{j \rightarrow \mc T}$} 
In the first transmission stage, i.e., $j=1$, user $1$ transmits the unicast signals $X_{1 \rightarrow \{2\}}, X_{1 \rightarrow \{3\}}$, and the multicast signal $X_{1 \rightarrow \{2,3\}}$ to users $\{2,3\}$. In particular, the unicast signal $X_{1 \rightarrow \{2\}}$ delivers the subset of $W_{d_2} $ which is stored exclusively at user $1$, i.e., subfile $\tilde W_{d_2,\{1\}} $, in addition to a fraction of the subfile stored exclusively at users $\{1,3\} $, which we denote by $W_{d_2,\{1,3\}}^{1 \rightarrow \{2\}} $. In turn, $X_{1 \rightarrow \{2\}}$ is given by 
\begin{align}\label{eqn_delv_3UE_1}
X_{1 \rightarrow \{2\}}=\tilde W_{d_2,\{1\}} \bigcup W_{d_2,\{1,3\}}^{1 \rightarrow \{2\}},
\end{align} 
where $W_{d_2,\{1,3\}}^{1 \rightarrow \{2\}} \! \subset \! \tilde W_{d_2,\{1,3\}}$, such that $|W_{d_2,\{1,3\}}^{1 \rightarrow \{2\}}| \! = \! u_{\{1,3\}}^{1 \rightarrow \{2\}} F$ bits. That is, the \textit{assignment variable} $u_{\mc S}^{j \rightarrow \mc T} \in [0,a_{\mc S}]$ represents the fraction of the subfile $\tilde W_{\mc S} $ which is involved in the transmission from user $j$ to the users in $\mc T$. Similarly, the unicast signal $X_{1 \rightarrow \{3\}}$ is given by
\begin{align}\label{eqn_delv_3UE_2}
X_{1 \rightarrow \{3\}}=\tilde W_{d_3,\{1\}} \bigcup W_{d_3,\{1,2\}}^{1 \rightarrow \{3\}},
\end{align} 
where $W_{d_3,\{1,2\}}^{1 \rightarrow \{3\}} \! \subset \! \tilde W_{d_3,\{1,2\}}$, such that $|W_{d_3,\{1,2\}}^{1 \rightarrow \{3\}}| \! = \! u_{\{1,2\}}^{1 \rightarrow \{3\}} F$ bits.

The multicast signal $X_{1 \rightarrow \{2,3\}}$ is created by XORing the pieces $W_{d_2,\{1,3\}}^{1 \rightarrow \{2,3\}}$, and $W_{d_3,\{1,2\}}^{1 \rightarrow \{2,3\}}$, which are assumed to have equal size. That is, $X_{1 \rightarrow \{2,3\}}$ is defined by
\begin{align}\label{eqn_delv_3UE_3}
X_{1 \rightarrow \{2,3\}}=W_{d_2,\{1,3\}}^{1 \rightarrow \{2,3\}} \oplus  W_{d_3,\{1,2\}}^{1 \rightarrow \{2,3\}},
\end{align}
where $W_{d_2,\{1,3\}}^{1 \rightarrow \{2,3\}} \! \subset \! \tilde W_{d_2,\{1,3\}}$ and $W_{d_3,\{1,2\}}^{1 \rightarrow \{2,3\}} \! \subset \! \tilde W_{d_3,\{1,2\}}$. 

From (\ref{eqn_delv_3UE_1})-(\ref{eqn_delv_3UE_3}), we observe that subfile $\tilde W_{d_2,\{1,3\}}$ contributes to both $X_{1 \rightarrow \{2\}}$, and $X_{1 \rightarrow \{2,3\}} $. Additionally, in the third transmission stage subfile $\tilde W_{d_2,\{1,3\}}$ contributes to both $X_{3 \rightarrow \{2\}}$, and $X_{3 \rightarrow \{1,2\}} $. Therefore, in order to prevent users $\{1,3\}$ from transmitting redundant bits to user $2$ from $\tilde W_{d_2,\{1,3\}}$, we need to partition $\tilde W_{d_2,\{1,3\}}$ into disjoint pieces, i.e., 
\begin{align}\label{eqn_delv_3UE_4}
 W_{d_2,\{1,3\}}^{1 \rightarrow \{2\}} \bigcap W_{d_2,\{1,3\}}^{1 \rightarrow \{2,3\}} \bigcap W_{d_2,\{1,3\}}^{3 \rightarrow \{2\}} \bigcap W_{d_2,\{1,3\}}^{3 \rightarrow \{1,2\}} = \phi. 
\end{align}

\subsubsection{Delivery phase constraints} Next, we describe the delivery phase in terms of linear constraints on the transmission variables $v_{j \rightarrow \mc T} $ and the assignment variables $u_{\mc S}^{j \rightarrow \mc T} $, which represent $|X_{j \rightarrow \mc T}|/F  $ and  $|W_{d_i,\mc S}^{j \rightarrow \mc T}|/F $, respectively. 
 First, the structure of the unicast signals in (\ref{eqn_delv_3UE_1}) and (\ref{eqn_delv_3UE_2}) is represented by 
\begin{align}
v_{1 \rightarrow \{2\}}=a_{\{1\}}+ u_{\{1,3\}}^{1 \rightarrow \{2\}}, \label{eqn_delv_3UE_5}\ 
v_{1 \rightarrow \{3\}}=a_{\{1\}}+ u_{\{1,2\}}^{1 \rightarrow \{3\}}. 
\end{align}
Similarly, for the second and third transmission stage, we have
\begin{align}
v_{2 \rightarrow \{1\}}=a_{\{2\}}+ u_{\{2,3\}}^{2 \rightarrow \{1\}}, \
v_{2 \rightarrow \{3\}}=a_{\{2\}}+ u_{\{1,2\}}^{2 \rightarrow \{3\}}, \\ 
v_{3 \rightarrow \{1\}}=a_{\{3\}}+ u_{\{2,3\}}^{3 \rightarrow \{1\}}, \
v_{3 \rightarrow \{2\}}=a_{\{3\}}+ u_{\{1,3\}}^{3 \rightarrow \{2\}}. 
\end{align}
The structure of the multicast signal in (\ref{eqn_delv_3UE_3}) is represented by
\begin{align}
v_{1 \rightarrow \{2,3\}}=u_{\{1,3\}}^{1 \rightarrow \{2,3\}} = u_{\{1,2\}}^{1 \rightarrow \{2,3\}}.  
\end{align}
Similarly, for the second and third transmission stage, we have
\begin{align}
v_{2 \rightarrow \{1,3\}}=u_{\{2,3\}}^{1 \rightarrow \{2,3\}} = u_{\{1,2\}}^{1 \rightarrow \{2,3\}}, \
v_{3 \rightarrow \{1,2\}}=u_{\{2,3\}}^{3 \rightarrow \{1,2\}} = u_{\{1,3\}}^{3 \rightarrow \{1,2\}}.  
\end{align}
Additionally, (\ref{eqn_delv_3UE_4}) ensures that $\tilde W_{d_2,\{1,3\}}$ is divided into disjoint pieces which prevents the transmission of redundant bits. Hence, we have 
\begin{align}\label{eqn_delv_3UE_7}
u_{\{1,3\}}^{1 \rightarrow \{2\}} + u_{\{1,3\}}^{1 \rightarrow \{2,3\}} + u_{\{1,3\}}^{3 \rightarrow \{2\}} + u_{\{1,3\}}^{3 \rightarrow \{1,2\}} \leq a_{\{1,3\}}.  
\end{align}
Similarly, the redundancy constraints for the subfiles $\tilde W_{d_3,\{1,2\}}$ and $\tilde W_{d_1,\{2,3\}}$, are given by
\begin{align}
u_{\{1,2\}}^{1 \rightarrow \{3\}} + u_{\{1,2\}}^{1 \rightarrow \{2,3\}} + u_{\{1,2\}}^{2 \rightarrow \{3\}} + u_{\{1,2\}}^{2 \rightarrow \{1,3\}} \leq a_{\{1,2\}}, \\ 
u_{\{2,3\}}^{2 \rightarrow \{1\}} + u_{\{2,3\}}^{2 \rightarrow \{1,3\}} + u_{\{2,3\}}^{3 \rightarrow \{1\}} + u_{\{2,3\}}^{3 \rightarrow \{1,2\}} \leq a_{\{2,3\}}.  \label{eqn_delv_3UE_8}
\end{align}
Finally, we need to ensure that the transmitted signals complete the requested files, i.e., we have the following delivery completion constraints
\begin{align}
v_{2 \rightarrow \{1\}}+v_{2 \rightarrow \{1,3\}}+v_{3 \rightarrow \{1\}}+v_{3 \rightarrow \{1,2\}} \geq 1-\sum_{\mc S \subset [K]  \ \! : \ \! 1 \in \mc S } \! \! \! \! \! \! \! \! a_{\mc S }, \label{eqn_delv_3UE_9} \\
v_{1 \rightarrow \{2\}}+v_{1 \rightarrow \{2,3\}}+v_{3 \rightarrow \{2\}}+v_{3 \rightarrow \{1,2\}} \geq 1-\sum\limits_{\mc S \subset [K]  \ \! : \ \! 2 \in \mc S } \! \! \! \! \! \! \! \! a_{\mc S }, \\
v_{1 \rightarrow \{3\}}+v_{1 \rightarrow \{2,3\}}+v_{2 \rightarrow \{3\}}+v_{2 \rightarrow \{1,3\}} \geq 1-\sum\limits_{\mc S \subset [K]  \ \! : \ \! 3 \in \mc S } \! \! \! \! \! \! \! \! a_{\mc S }.\label{eqn_delv_3UE_6}
\end{align}
Therefore, the set of feasible linear delivery schemes for a three-user system is defined by (\ref{eqn_delv_3UE_5})-(\ref{eqn_delv_3UE_6}), and $u_{\mc S}^{j \rightarrow \mc T} \! \! \in [0, a_{\mc S} ]$.  
\subsection{Delivery phase: $K$-user system}
In general, the unicast signal transmitted by user $j$ to user $i $ is defined by
\begin{align}\label{eqn_delv_KUE_1}
X_{j \rightarrow \{i\}}=\tilde W_{d_i,\{j\}} \bigcup \Bigg( \bigcup_{\mc S \subset [K]\setminus \{i\} \! \ : \! \ j \in \mc S, |\mc S| \geq 2 } \! \! \! \! \! \! \! \! \! \! \! \! \! \! \! \! \! W_{d_i,\mc S}^{j \rightarrow \{i\}} \Bigg),
\end{align}
where $ W_{d_i,\mc S}^{j \rightarrow \{i\}} \! \subset \! \tilde W_{d_i,\mc S} $ such that $|W_{d_i,\mc S}^{j \rightarrow \{i\}}| = u^{ j \rightarrow \{i\}}_{\mc S} F$ bits. While, user $j$ constructs the multicast signal $X_{j \rightarrow \mc T}$, such that the piece intended for user $i \in \mc T $, which we denote by $W_{d_i}^{j \rightarrow \mc T} $, is stored at users $\{j\} \cup (\mc T \setminus \{i\}) $. That is, $X_{j \rightarrow \mc T}$ is constructed using the side information at the sets
\begin{align}
\mc B^{ j \rightarrow \mc T}_{i } \triangleq \Big\{ \mc S \subset [K] \setminus \{i\} :  \{j\} \cup (\mc T \setminus \{i\}) \subset \mc S \Big\}, 
\end{align}
which represents the subfiles stored at users $\{j\} \cup (\mc T \setminus \{i\}) $ and not available at user $i \in \mc T$. In turn, we have
\begin{align}\label{eqn_delv_KUE_2}
X_{j \rightarrow \mc T}= \oplus_{i \in \mc T} W_{d_i}^{j \rightarrow \mc T} = \oplus_{i \in \mc T} \bigg( \bigcup_{\mc S \in \mc B^{ j \rightarrow \mc T}_{i } } \! \! \! \!  W_{d_i,\mc S}^{j \rightarrow \mc T} \bigg).
\end{align}
\begin{remark}\label{remark_multicast} The definition of the multicast signals in (\ref{eqn_delv_KUE_2}) allows flexible utilization of the side-information, i.e., $X_{j \rightarrow \mc T} $ is not defined only in terms of the side-information stored exclusively at users $\{j\} \cup (\mc T \setminus \{i\}) $ as in \cite{ji2016fundamental}. Furthermore, a delivery scheme with the multicast signals $X_{j \rightarrow \mc T}= \oplus_{i \in \mc T} W_{d_i,\{j\} \cup (\mc T \setminus \{i\})}^{j \rightarrow \mc T} $ is suboptimal in general.
\end{remark}
%
%
\begin{algorithm}[t]
\begin{algorithmic}[1]
\REQUIRE $\bm d, \bm a, \bm u, \bm v$, and $\tilde W_{n,\mc S}$ 
\ENSURE $ X_{j \rightarrow \mc T }, \! \ \forall j \in [K], \! \ \forall \ \! \mc T \subsetneq_{\phi} [K] \! \setminus \! \{j\}$

\COMMENT{Partitioning}
\FOR{$\{ \mc S \subset [K] : 2 \leq |\mc S|\leq K \! - \! 1 \}$}
\FOR{$ \{i \in [K]: i \not\in \mc S\}$}
\STATE Divide $\tilde W_{d_i,\mc S}$ into $ W_{d_i,\mc S}^{j \rightarrow \mc T}, \! \ \forall j \in \mc S, \forall \! \  \mc T \subset \{i\} \cup (\mc S \setminus \{j\}) \text{ s.t. } i \in \mc T $, such that $| W_{d_i,\mc S}^{j \rightarrow \mc T}|=u^{j \rightarrow \mc T}_{\mc S} F$ bits.  
\ENDFOR
\ENDFOR

\COMMENT{Transmission stage $j$}
\FOR{$ j \in [K]$}  
\FOR{$ \mc T \subsetneq_{\phi} [K] \! \setminus \! \{j\}$}  
\IF{$\mc T=\{i\} $} 
\STATE $X_{j \rightarrow \{i\}} \leftarrow  \tilde W_{d_i,\{j\}} \bigcup \bigg( \bigcup\limits_{\mc S \subset [K]\setminus \{i\} \! \ : \! \ j \in \mc S, |\mc S| \geq 2 } \! \! \! \! \! \! \! \! \! \! \! \!   W_{d_i,\mc S}^{j \rightarrow \{i\}} \bigg)$ 
\ELSE 
\STATE $X_{j \rightarrow \mc T} \leftarrow \oplus_{i \in \mc T} \bigg( \bigcup\limits_{\mc S \in \mc B^{ j \rightarrow \mc T}_{i } } \! \! \! \!  W_{d_i,\mc S}^{j \rightarrow \mc T} \bigg) $ 
\ENDIF
\ENDFOR
\ENDFOR
\end{algorithmic}
 \caption{D2D delivery procedure}\label{alg_delv}
\end{algorithm}
%
The set of feasible linear delivery schemes, $\mathfrak{D}(\bm a) $, is defined by 
\begin{align}
& \! \! \!    v_{j \rightarrow \{i\}} \! = \! a_{\{j\}} + \! \! \!  \! \! \sum_{\mc S \subset [K]\setminus \{i\} \ \! : \ \!  j \in \mc S, |\mc S|\geq 2 }   \! \! \! \! \! \! \! \! \! \! \!   u^{j \rightarrow \{i\}}_{\mc S}, \! \! \ \forall j \in [K], \! \ \forall \! \  i \in \mc T, \label{eqn_dlv_cnst1} \\
& \! \! \!    v_{j \rightarrow \mc T} \! = \! \! \! \! \! \! \sum_{\mc S \in \mc B_i^{j \rightarrow \mc T } } \! \!  u^{j \rightarrow \mc T}_{\mc S}, \! \! \ \forall j \in [K], \! \ \forall \ \! \mc T \subsetneq_{\phi} [K] \! \setminus \! \{j\} \! , \! \ \forall   i \in \mc T, \!\label{eqn_dlv_cnst2} \\
&\sum_{j \in \mc S} \sum\limits_{\mc T \subset \{i\} \cup (\mc S \setminus \{j\}) \ \! : \ \! i \in \mc T} \! \! \! \! \! \! \! \! \! \! \!  \! \! \!  u^{j \rightarrow \mc T }_{\mc S} \leq a_{\mc S}, \! \ \forall \! \  i \not\in \mc S,  \forall \! \ \mc S \subset [K] \text{ s.t. }  2 \leq |\mc S| \leq K \! - \! 1, \label{eqn_dlv_cnst3}  \\
&\sum_{j \in [K] \setminus \{k\}} \sum\limits_{\mc T \subset [K] \setminus \{j\} \ \! : \ \! k \in \mc T } \! \! \! \! v_{ j \rightarrow \mc T }  \geq 1- \! \! \! \! \sum\limits_{\mc S \subset [K]  \ \! : \ \! k \in \mc S } \! \! \! \! \! \! \! \! a_{\mc S }, \ \forall \! \  k \in [K], \label{eqn_dlv_cnst4} \\
& 0 \leq u_{\mc S}^{ j \rightarrow \mc T } \! \! \leq a_{\mc S}, \forall j \! \in \! [K],  \forall \! \  \mc T \! \subsetneq_{\phi} \! [K] \! \setminus \! \{j\},  \forall \! \  \mc S \! \in \! \mc B^{j \rightarrow \mc T} \! \! \! , \label{eqn_dlv_cnst5}  
\end{align}
where $\mc B^{j \rightarrow \mc T} \triangleq \bigcup_{ i \in \mc T} \mc B^{ j \rightarrow \mc T}_{i }$. Note that (\ref{eqn_dlv_cnst1}) follows from the structure of the unicast signals in (\ref{eqn_delv_KUE_1}), (\ref{eqn_dlv_cnst2}) follows from the structure of the multicast signals in (\ref{eqn_delv_KUE_2}), (\ref{eqn_dlv_cnst3}) generalizes the redundancy constraints in (\ref{eqn_delv_3UE_7})-(\ref{eqn_delv_3UE_8}), and (\ref{eqn_dlv_cnst4}) generalizes the delivery completion constraints in (\ref{eqn_delv_3UE_9})-(\ref{eqn_delv_3UE_6}). The delivery procedure is summarized in Algorithm \ref{alg_delv}.

\subsection{Examples}

Next, we illustrate the solution of (\ref{eqn_opt}) by two examples. The first example considers a case where the delivery scheme in \cite{ji2016fundamental} can be generalized for unequal cache sizes, by considering D2D transmissions with different sizes. In turn, the heterogeneity in cache sizes does not increase the delivery load, i.e., we achieve the same delivery load in a homogeneous system with the same aggregate cache size.    
\begin{example}
For $K \!= \! N \! = \! 3 $ and $\bm m =[0.6, \ \! 0.7, \ \! 0.8]$, the optimal caching scheme is as follows: \newline
\textbf{Placement phase:} Each file $W_{n}$ is divided into four subfiles, such that $a_{\{1,2\}} \! = \! 0.2$, $a_{\{1,3\}} \! = \! 0.3$, $a_{\{2,3\}} \! = \! 0.4$, and $a_{\{1,2,3\}} \! = \! 0.1$.\newline 
\textbf{Delivery phase:} The D2D transmissions 
\begin{itemize}
\item $| X_{1 \rightarrow \{2,3\}}|/F\!=\!$ $v_{1 \rightarrow \{2,3\}} \! = \! u_{\{1,2\}}^{1 \rightarrow \{2,3\}} \! = \! u_{\{1,3\}}^{1 \rightarrow \{2,3\}} \! = \! 0.05 .$
\item $| X_{2 \rightarrow \{1,3\}}|/F\! =\!$ $v_{2 \rightarrow \{1,3\}} \! = \! u_{\{1,2\}}^{2 \rightarrow \{1,3\}} \! = \! u_{\{2,3\}}^{2 \rightarrow \{1,3\}} \! = \! 0.15. $
\item $|X_{3 \rightarrow \{1,2\}}|/F\! =\!$ $v_{3 \rightarrow \{1,2\}} \! = \! u_{\{1,3\}}^{3 \rightarrow \{1,2\}} \! = \! u_{\{2,3\}}^{3 \rightarrow \{1,2\}} \! = \! 0.25. $
\end{itemize}
The placement and delivery phases are illustrated in Fig. \ref{fig:D2D_ex}. Note that the same delivery load is achieved by the caching scheme in \cite{ji2016fundamental} for $\bm m =[0.7, \ \! 0.7, \ \! 0.7]$. In Theorem \ref{thm_3ue}, we show that the proposed scheme achieves $R^*_{\mathfrak{A}} (\bm m) = \dfrac{3}{2}  - \dfrac{1}{2} \big( m_1+m_2+m_3\big) \! = \! 0.45$.
\end{example}
\begin{figure}[t]
\includegraphics[scale=1.1]{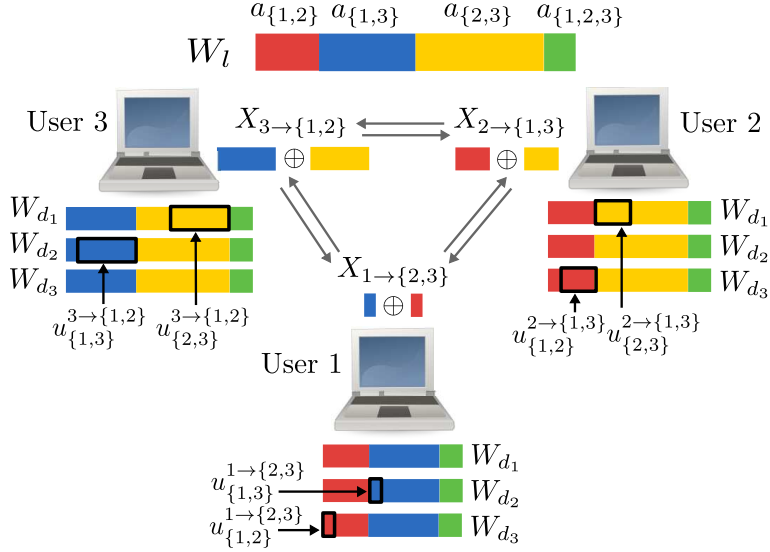}
\centering
\vspace{-.1in}
\caption{Example $K=N=3$, and $ \bm m =[0.6, \ \! 0.7, \ \! 0.8]$.}\label{fig:D2D_ex}
\end{figure}

Next example shows the suboptimality of delivery schemes that do not allow flexible utilization of the side-information, as pointed out in Remark \ref{remark_multicast}. By contrast, our delivery scheme achieves the delivery load memory trade-off with uncoded placement, $R^*_{\mathfrak{A}} (\bm m)$. 
\begin{example}
For $K \!= \! N \! = \! 4 $ and $\bm m =[0.2, \ \! 0.7, \ \! 0.7, \ \! 0.7]$, we have $R^*_{\mathfrak{A},\mathfrak{D}} (\bm m)\! =R^*_{\mathfrak{A}} (\bm m) \! = \!1.05$, and the optimal caching scheme is as follows: \newline
\textbf{Placement phase:} Each file $W_n$ is divided into seven subfiles, such that $a_{\{1,2\}} \! = \! a_{\{1,3\}} \! = \! a_{\{1,4\}} \! = \! 0.2/3$, $a_{\{2,3\}} \! = \! a_{\{2,4\}} \! = \! a_{\{3,4\}} \! = \! 0.5/3$, and $a_{\{2,3,4\}} \! = \! 0.3$.\newline 
\textbf{Delivery phase:} We have the D2D transmissions $X_{2 \rightarrow \{1\}}$, $X_{2 \rightarrow \{1,3\}}$, $X_{2 \rightarrow \{1,4\}}$, $X_{2 \rightarrow \{3,4\}}$, $X_{3 \rightarrow \{1\}}$, $X_{3 \rightarrow \{1,2\}}$, $X_{3 \rightarrow \{1,4\}}$, $X_{3 \rightarrow \{2,4\}}$, $X_{4 \rightarrow \{1\}}$, $X_{4 \rightarrow \{1,2\}}$, $X_{4 \rightarrow \{1,3\}}$, and $X_{4 \rightarrow \{2,3\}}$. In particular, we have $v_{2 \rightarrow \{1\}} \! = \! v_{3 \rightarrow \{1\}} \! = \! v_{4 \rightarrow \{1\}} \! = \! 0.4/3$, $v_{2 \rightarrow \{1,3\}} \! = \! v_{2 \rightarrow \{1,4\}} \! = \! v_{3 \rightarrow \{1,2\}} \! = \! v_{3 \rightarrow \{1,4\}} \! = \! v_{4 \rightarrow \{1,2\}} \! = \! v_{4 \rightarrow \{1,3\}} \! = \! 0.2/3$, and $v_{2 \rightarrow \{3,4\}} \! = \! v_{3 \rightarrow \{2,4\}} \! = \! v_{4 \rightarrow \{2,3\}} \! = \! 0.25/3$. More specifically, the signals transmitted by user $2$ are defined as follows
\begin{itemize}
\item $| X_{2 \rightarrow \{1\}}|/F\!=\!$ $v_{2 \rightarrow \{1\}} \! = \! u_{\{2,3\}}^{2 \rightarrow \{1\}} \! + \! u_{\{2,4\}}^{2 \rightarrow \{1\}} \! + \! u_{\{2,3,4\}}^{2 \rightarrow \{1\}}  \!= \! (0.05+0.05+0.3)/3.$
\item $| X_{2 \rightarrow \{1,3\}}|/F\! =\!$ $v_{2 \rightarrow \{1,3\}} \! = \! u_{\{1,2\}}^{2 \rightarrow \{1,3\}} \! = \! u_{\{2,3\}}^{2 \rightarrow \{1,3\}} \! = \! 0.2/3. $
\item $|X_{2 \rightarrow \{1,4\}}|/F\! =\!$ $v_{2 \rightarrow \{1,4\}} \! = \! u_{\{1,2\}}^{2 \rightarrow \{1,4\}} \! = \! u_{\{2,4\}}^{2 \rightarrow \{1,4\}} \! = \! 0.2/3. $
\item $|X_{2 \rightarrow \{3,4\}}|/F\! =\!$ $v_{2 \rightarrow \{3,4\}} \! = \! u_{\{2,3\}}^{2 \rightarrow \{3,4\}} \! = \! u_{\{2,4\}}^{2 \rightarrow \{3,4\}} \! = \! 0.25/3. $
\end{itemize}
Note that the signals transmitted by users $3$ and $4$ have similar structure to the signals transmitted by user $2$, which are illustrated in Fig. \ref{fig:D2D_ex2}. If we restrict the design of the D2D signals to be in the form of  $X_{j \rightarrow \mc T}= \oplus_{i \in \mc T} W_{d_i,\{j\} \cup (\mc T \setminus \{i\})}^{j \rightarrow \mc T} $, i.e., without the flexibility in utilizing the side information, we achieve a delivery load equal to $1.6$ compared with the optimal load $R^*_{\mathfrak{A}} (\bm m)\! = \!1.05$. 
\end{example}

\begin{figure}[t]
\includegraphics[scale=1]{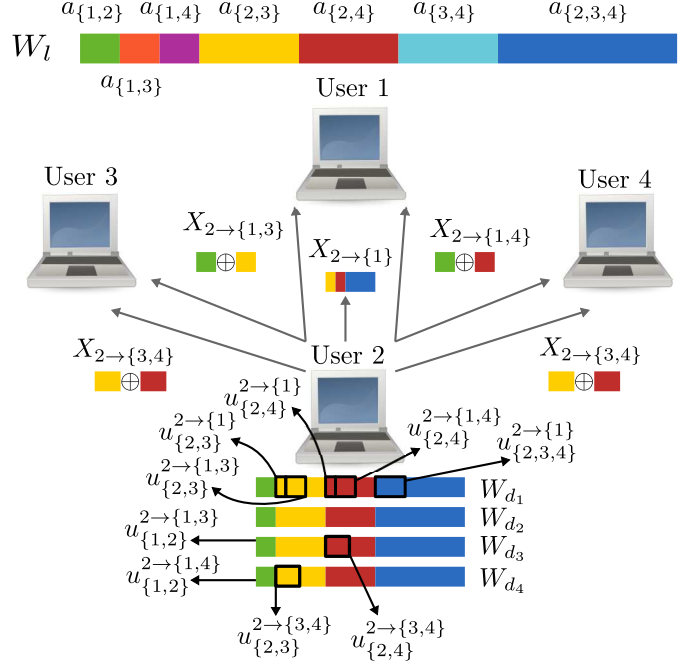}
\centering
\vspace{-0.1in}
\caption{Example $K=N=4$, and $ \bm m =[0.2, \ \! 0.7, \ \! 0.7, \ \! 0.7]$.}\label{fig:D2D_ex2}
\end{figure} 



\section{ Caching scheme: Achievability }\label{sec_achiev}

Next, we explicitly define the caching schemes that achieve the delivery loads defined in Theorems \ref{thm_m1_lb}, \ref{thm_mtot_2}, and \ref{thm_mtot_K}. 

\subsection{Achievability Proof of Theorem \ref{thm_m1_lb} }\label{proof_ach_thm_m1_lb}%
Next, we explain how the caching scheme in \cite{ji2016fundamental} can be tailored to systems with unequal cache sizes. Recall that for a homogeneous system where $m_k \! = \! m, \ \! \forall k$, in the placement phase, $W_{n}$ is divided into subfiles 
$\tilde W_{n,\mc S}, \mc S \subset [K]$, where $|\mc S| \in \{t,t+1\}$ for $t \leq \sum_{k=1}^{K} m_k \leq t  + 1$ and $\ t \in [K \!- \!1]$ \cite{ji2016fundamental}. More specifically, subfiles stored at the same number of users have equal size, i.e., $|\tilde W_{n,\mc S}|= |\tilde W_{n,\mc S^\prime}|$ if $|S|=|S^\prime|$. In order to accommodate the heterogeneity in cache sizes, we generalize the placement scheme in \cite{ji2016fundamental}, by allowing subfiles stored at the same number of users to have different sizes. The delivery procedure in \cite{ji2016fundamental} is generalized as follows. First, we further divide $\tilde W_{d_i,\mc S}$ into $|\mc S|$ pieces, $ W_{d_i,\mc S}^{j \rightarrow \mc S \setminus \{j\}\cup \{i\}}, \ \! j \in \mc S$, such that
\begin{align}
\left|W_{d_i,\mc S}^{j \rightarrow \mc S \setminus \{j\}\cup \{i\}} \right| = \begin{cases} \eta_{j} F, \text{ if } |\mc S|=t.\\
\theta_{j} F, \text{ if } |\mc S|=t+1.
\end{cases}
\end{align}
The multicast signal $X_{j \rightarrow \mc T}$ is constructed such that the piece requested by user $i$ is cached by the remaining $\mc T \setminus \{i\} $ users. That is, user $j$ transmits the signals $X_{j \rightarrow \mc T} = \oplus_{i \in \mc T} W^{j \rightarrow \mc T}_{d_i,\{j\} \cup \mc T \setminus \{i\}}$, $\forall \mc T \subset [K]\setminus \{j\}$ and $|\mc T| \in \{t,t+1\}$. For example, for $K=4$ and $t=2$, we have  
\begin{align}
X_{j \rightarrow \{i_1,i_2\}} &=  W^{j \rightarrow \{i_1,i_2\}}_{d_{i_1},\{j,i_2\}} \oplus  W^{j \rightarrow \{i_1,i_2\}}_{d_{i_2},\{j,i_1\}}, \\
X_{j \rightarrow \{i_1,i_2,i_3\}} &=  W^{j \rightarrow \{i_1,i_2,i_3\}}_{d_{i_1},\{j,i_2,i_3\}} \oplus W^{j \rightarrow \{i_1,i_2,i_3\}}_{d_{i_2},\{j,i_1,i_3\}} \oplus W^{j \rightarrow \{i_1,i_2,i_3\}}_{d_{i_3},\{j,i_1,i_2\}}.
\end{align}
In turn, the D2D delivery load is given as 
\begin{align}\label{eqn_app_24}
R^*_{\mathfrak{A},\mathfrak{D}}(\bm m)= \binom{K-1}{t} \sum_{j=1}^{K} \eta_{j} + \binom{K-1}{t+1} \sum_{j=1}^{K} \theta_{j}.
\end{align}

Next, we need to choose $\eta_{j}$ and $\theta_{j}$ taking into account the feasibility of the placement phase. To do so, we need to choose a non-negative solution to the following equations
\begin{align}
\binom{K-1}{t-1} \eta_k + \binom{K-2}{t-2} \! \! \sum_{i \in [K]\setminus \{k\}} \! \! \! \! \eta_i + \binom{K-1}{t} \theta_k + \binom{K-2}{t-1} \! \! \sum_{i \in [K]\setminus \{k\}} \! \! \! \! \theta_i &= m_k, \ \! \forall k \in [K], \\
\binom{K-1}{t-1} \! \! \sum_{i \in [K]} \eta_i + \binom{K-1}{t} \! \! \sum_{i \in [K]}  \theta_i &= 1,
\end{align}
which can be simplified to 
\begin{align}
\sum_{i=1}^{K} \eta_i &= \dfrac{t+1- \sum_{i=1}^{K} m_k}{\binom{K-1}{t-1}}, \label{eqn_app_25} \\
\eta_k + \dfrac{K \! - \! t \! - \!1}{t} \theta_k &= \dfrac{1+(K \! - \!2)m_k - \sum_{i \in [K]\setminus \{k\}} m_i}{(K \! - \! t) \binom{K-1}{t-1}}, \ \! \forall k \in [K], \label{eqn_app_26}
\end{align}
By combining (\ref{eqn_app_24}), (\ref{eqn_app_25}), and (\ref{eqn_app_26}), one can show that the D2D delivery load is given as 
\begin{align}
R^*_{\mathfrak{A},\mathfrak{D}}(\bm m)=\left(\dfrac{K-t}{t}\right)\left( t \! + \! 1 \! - \! \sum_{k=1}^{K} m_k \right)+\left(\dfrac{K-t-1}{t+1}\right)\left( \sum_{k=1}^{K} m_k -t\right).   
\end{align}
Observe that there always exists a non-negative solution to (\ref{eqn_app_25}) and (\ref{eqn_app_26}), since we have $(K \! - \! 2) m_1 \geq \sum_{k=2}^{K} m_k \! - \! 1 $. For instance, one can assume that $\binom{K-1}{t-1} \eta_k = \rho_k \big( t+1- \sum_{i=1}^{K} m_i \big)$, where $ \sum_{k=1}^{K} \rho_k =1$ and $0 \leq \rho_k \leq \dfrac{1+(K \! - \!2)m_k - \sum_{i \in [K]\setminus \{k\}} m_i}{(K \! - \! t)(t \! + \! 1 \! - \! \sum_{i=1}^{K} m_k)}$, which guarantee that $\eta_k,\theta_k \geq 0$.
\begin{remark} For nodes with equal cache sizes, the proposed scheme reduces to the scheme proposed in \cite{ji2016fundamental}. In particular, for $m_k=t/K, \forall k $, we get $\theta_j=0, \forall j $ and $\eta_j = \dfrac{1}{t \binom{K}{t}}, \forall j$.
\end{remark}

\subsection{Achievability Proof of Theorem \ref{thm_mtot_2} }\label{proof_ach_thm_mtot_2}%
For $(K \! - \! l- \! 1)m_l < \sum_{i=l+1}^{K} m_i \! - \! 1$ and $(K \! - \! l- \! 2)m_{l+1} \geq \sum_{i=l+2}^{K} m_i \! - \! 1$, where $ l \in [K \!- \!2]$, in the placement phase, each file $W_{n}$ is partitioned into subfiles $\tilde W_{n,\{i\}}, i \in \{l\!+\!1,\dots,K\} $, $\tilde W_{n,\{j,i\}}, j \in [l],i \in \{l\!+\!1,\dots,K\}  $, and $ \tilde W_{n,\mc S}, \mc S \subset  \{l\!+\!1,\dots,K\}, |\mc S|\!= \!2$, which satisfy 
\begin{subequations}\label{eqn_achv_thm5_1}
\begin{align}
&\sum_{j=l+1}^{K} a_{\{j\}} = 2- \sum_{k=1}^{K} m_k, \ \sum_{\mc S \subset \{l+1,\dots,K\} : |\mc S|=2}  \! \! \! \! \! \! \! \! a_{\mc S} = \sum_{i=l+1}^{K} m_i -1,\\
&\sum_{j=l+1}^{K} a_{\{i,j\}} = m_j, \ \! i \in [l], \ a_{\{j\}} + \sum_{\mc S \subset [K] : |\mc S|=2, j \in \mc S} \! \! \! \! \! \! \! \! a_{\mc S} =m_j, \ \! j=l\!+\!1,\dots,K.
\end{align}
\end{subequations}
In particular, we choose any non-negative solution to (\ref{eqn_achv_thm5_1}) that satisfies
\begin{enumerate}
\item For $j \in \{l+1,\dots,K\} $, $a_{\{i_1,j\}} \leq a_{\{i_2,j\}}$ if $i_1 < i_2$, which is feasible because $m_{i_1}\leq m_{i_2} $.
\item For $\{i,j\}\subset \{l\!+\!1, \dots, K\}$, $a_{\{l,i\}} + a_{\{l,j\}} \leq a_{\{i,j\}}$, which is also feasible because $(K \! - \! l- \! 1)m_l < \sum_{i=l+1}^{K} m_i \! - \! 1$.
\end{enumerate}
In the delivery phase, we have the following multicast transmissions:
\begin{itemize}
\item \underline{Multicast to user $1$}: For $j \in \{l\!+\!1,\dots,K \}$ and $i \in [K]\setminus \{1,j\} $, we choose $v_{j \rightarrow \{1,i\}} \! = \! a_{\{1,j\}}$. 
\begin{align}
\sum_{j=l+1}^{K} \sum_{i \in [K]\setminus \{1,j\}} v_{j \rightarrow \{1,i\}} =\sum_{j=l+1}^{K} \sum_{i \in [K]\setminus \{1,j\}} a_{\{1,j\}} = (K-2)m_1.
\end{align}
\item \underline{Multicast to user $2$}: For $j \in \{l\!+\!1,\dots,K \}$ and $i \in [K]\setminus \{1,2,j\} $, we choose $v_{j \rightarrow \{2,i\}} \! = \! a_{\{2,j\}}$.
\begin{align}
\! \! \! \! \! \! \sum_{j=l+1}^{K} \! \! v_{j \rightarrow \{1,2\}} + \sum_{j=l+1}^{K} \sum_{i \in [K]\setminus \{1,2,j\}} \! \! \! \!\! \! \! \! v_{j \rightarrow \{2,i\}} \! = \! \sum_{j=l+1}^{K} \! \! a_{\{1,j\}} \! + \! \sum_{j=l+1}^{K} \sum_{i \in [K]\setminus \{1,2,j\}} \! \! \! \! \! \! \! \! a_{\{2,j\}} \! = \! m_1 \! + \! (K \! - \! 3)m_2.
\end{align}
\item \underline{Multicast to user $k \in \{3, \dots, l\}$}: Similarly, we have
\begin{align}
\! \! \! \! &\sum_{j=l+1}^{K} v_{j \rightarrow \{1,l\}} + \dots + \sum_{j=l+1}^{K} v_{j \rightarrow \{k-1,l\}} +\sum_{j=l+1}^{K} \sum_{i \in [K]\setminus \{1,\dots,k,j\}} v_{j \rightarrow \{l,i\}} \\
&= \! \! \sum_{j=l+1}^{K} \! a_{\{1,j\}} \! + \dots + \! \sum_{j=l+1}^{K} \! a_{\{k-1,j\}} \! + \! \sum_{j=l+1}^{K} \sum_{i \in [K]\setminus \{1,\dots,k,j\}} \! \! \! \! \! \! \! \! a_{\{l,j\}} = \sum_{i=1}^{k-1} m_i+(K\! - \! k \! - \! 1)m_l.
\end{align}
\item \underline{Multicast to users $\{l\!+\!1, \dots, K\}$}: For $\{i_1,i_2\} \subset \{l\!+\!1, \dots, K\}$, we have $a_{\{i_1,i_2\}} = v_{i_1 \rightarrow \{i_2,j\}} + v_{i_2 \rightarrow \{i_1,j\}},$ $\forall j \in \{l\!+\!1, \dots, K\} \setminus \{i_1,i_2\}$, i.e., we have $(K\!-\!l\!-\!2)\binom{K-l}{2}$ equations in $(K\!-\!l\!-\!2)\binom{K-l}{2}$ unknowns. In turn, we have 
\begin{align}
 \sum_{j=l+1}^{K} \sum_{ \mc S \subset \{l\!+\!1, \dots, K\}\setminus\{j\}: |\mc S|=2} \! \! \! \! \! \! \! \! \! \! \! \! \! \! \! v_{j \rightarrow \mc S} = \! \bigg( \! \dfrac{K\!- \! l \!- \! 2}{2} \! \bigg)  \! \! \sum_{\mc S \subset \{l\!+\!1, \dots, K\}: |\mc S|=2} \! \! \! \! \! \! \! \! a_{\mc S} \! = \! \bigg(\! \dfrac{K\!- \! l \!- \! 2}{2} \! \bigg) \bigg(\sum_{i=l+1}^{K} m_i -1\bigg). 
\end{align}
\end{itemize}
Therefore, the delivery load due to multicast transmissions is given by
\begin{align}
 \sum_{j=l+1}^{K} \sum_{ \mc S \subset [K]\setminus\{j\}: |\mc S|=2} \! \! \! \! \! \! \! \! \! \! v_{j \rightarrow \mc S} &= \sum_{j=l+1}^{K} \left( \sum_{i \in [K]\setminus \{1,j\}} \! \! \! \! \! \! v_{j \rightarrow \{1,i\}} + \dots + \! \! \! \! \! \! \! \! \! \! \sum_{i \in [K]\setminus \{1,\dots,l,j\}} \! \! \! \! \! \! v_{j \rightarrow \{l,i\}} + \! \! \! \! \! \! \! \! \sum_{ \mc S \subset \{l\!+\!1, \dots, K\}\setminus\{j\}: |\mc S|=2} \! \! \! \! \! \! \! \! \! \! \! \!  \!  v_{j \rightarrow \mc S} \right), \\ &= \sum_{i=1}^{l} (K\!-\!i\!-\!1) m_i + \bigg(\dfrac{K\!- \! l \!- \! 2}{2}\bigg) \bigg(\sum_{i=l+1}^{K} m_i -1\bigg). \label{eqn_app_15}
\end{align}
We also need the following unicast transmissions.
\begin{itemize}
\item \underline{Unicast to user $1$}: 
\begin{align}
\! \! \! \! \! \sum_{j=l+1}^{K} v_{j \rightarrow \{1\}} &= \sum_{j=l+1}^{K} a_{\{j\}} + \sum_{i=2}^{l} \sum_{j=l+1}^{K} (a_{\{i,j\}} \! -\! a_{\{1,j\}}) + \! \! \! \! \! \! \sum_{\{i,j\}\subset \{l\!+\!1, \dots, K\}} \! \! \! \! \! \! \! \! \! \!  (a_{\{i,j\}} \!-\! a_{\{1,i\}} -\! a_{\{1,j\}}), \\ &= \bigg(2 \! - \! \sum_{k=1}^{K} m_k \bigg) \! +\!  \sum_{i=2}^{l} m_i \! + \! \bigg(\sum_{i=l+1}^{K} m_i \! - \! 1 \bigg) \! - \! (K\! -\!2)m_1 = 1 \!-\!(K\! -\!1)m_1.
\end{align}
\item \underline{Unicast to user $2$}: 
\begin{align}
\! \! \! \! \! \sum_{j=l+1}^{K} \! \! v_{j \rightarrow \{2\}} &= \! \! \sum_{j=l+1}^{K} \! \! a_{\{j\}} \! + \! \! \sum_{i=3}^{l} \sum_{j=l+1}^{K} \! \! (a_{\{i,j\}} \! -\! a_{\{2,j\}}) \! + \! \! \! \! \! \! \! \! \!  \sum_{\{i,j\}\subset \{l\!+\!1, \dots, K\}} \! \! \! \! \! \! \! \! \! \!  (a_{\{i,j\}} \!-\! a_{\{2,i\}} -\! a_{\{2,j\}}) \\ &=  1 \!-\!(K\! -\!2)m_2 \! - \! m_1.
\end{align}
\item \underline{Unicast to user $k \in \{3, \dots, l\}$}: Similarly, we have
\begin{align}
\! \! \! \! \! \sum_{j=l+1}^{K} v_{j \rightarrow \{l\}} &= 1 \! -\!(K\! -\!k)m_k \! - m_{k-1}-\dots -\! m_1.
\end{align}
\item \underline{Unicast to users $\{l\!+\!1, \dots, K\}$}:
\begin{align}
 \sum_{j=l+1}^{K} \sum_{i=l+1, i \neq j}^{K} v_{j \rightarrow \{i\}} = (K\!-\!l\!-\!1) \sum_{j=l+1}^{K}  a_{\{j\}} = (K\!-\!l\!-\!1) \bigg( 2 \! - \! \sum_{k=1}^{K} m_k\bigg).
\end{align}
\end{itemize}
Therefore, the delivery load due to unicast transmissions is given by
\begin{align}\label{eqn_app_16}
\sum_{j=l+1}^{K} \sum_{i=1, i \neq j}^{K} v_{j \rightarrow \{i\}} = l \! - \! \sum_{i=1}^{l} (K+l-2i) m_i + (K\!-\!l\!-\!1) \bigg( 2 \! - \! \sum_{k=1}^{K} m_k\bigg)
\end{align}
By adding (\ref{eqn_app_15}) and (\ref{eqn_app_16}), we get the total D2D delivery load given by (\ref{eqn_thm_mtot_2}).

\subsection{Achievability Proof of Theorem \ref{thm_mtot_K} }\label{proof_ach_thm_mtot_K}%
For $\sum_{i=1}^{K} m_i \geq K\!-\!1$, in the placement phase, each file $W_{n}$ is partitioned into subfiles $  \tilde W_{n,[K]\setminus \{i\}}, i \in [K]$ and $\tilde W_{n,[K]} $, such that 
\begin{subequations}
\begin{align}
&a_{[K]}=\sum_{i=1}^{K}m_i -(K\!-\!1), \ a_{[K]\setminus \{k\}}=1-m_k, \ \! k \in [K]. 
\end{align}
\end{subequations}
In the delivery phase, for $(K \! - \! l- \! 1)m_l < \sum_{i=l+1}^{K} m_i \! - \! 1$ and $(K \! - \! l- \! 2)m_{l+1} \geq \sum_{i=l+2}^{K} m_i \! - \! 1$, where $ l \in [K \!- \!2]$, we have the following transmissions 
\begin{align}
X_{K \rightarrow [i]}&= \oplus_{k \in [i]} W^{K \rightarrow [i]}_{d_k,[K]\setminus \{k\}}, \ i \in [l], \\
X_{j \rightarrow [K]\setminus \{j\}}&=\oplus_{k \in [K]\setminus \{j\}} W^{j \rightarrow [K]\setminus \{j\}}_{d_k,[K]\setminus \{k\}}, \ \! j \in \{l\!+\!1,\dots,K\}.
\end{align}
In particular, we have
\begin{align}
v_{K \rightarrow [i]} &= u_{[K]\setminus \{k\}}^{K \rightarrow [i]}= m_{i+1}-m_i, \ \! i \in [l\!-\!1], k \in [i], \\
v_{K \rightarrow [l]} &= u_{[K]\setminus \{k\}}^{K \rightarrow [l]}= \dfrac{1}{K \! - \! l \! - \! 1} \bigg( \sum_{j=l+1}^{K} m_j \! - \! 1 \! - (K\!-\!l\!-\!1)m_l \bigg), k \in [l], \\
v_{j \rightarrow [K]\setminus \{j\}} \! &= \! u_{[K]\setminus \{k\}}^{j \rightarrow [K]\setminus \{j\}} \! = \! \dfrac{(K\!-\!l\!-\!1)m_j \! + \! 1 \! - \! \! \sum\limits_{i=l+1}^{K} \! \! m_i }{K\! - \! l \! - \! 1}  , \ \! j \! \in \! \{l\!+\!1,\dots,K\}, k \! \in \! [K]\! \setminus \! \{j\}.
\end{align}
Therefore, the D2D delivery load is given by
\begin{align}
R^*_{\mathfrak{A},\mathfrak{D}}(\bm m)= v_{K \rightarrow [l]} + \sum_{i=1}^{l} v_{K \rightarrow [i]} + \sum_{j=l+1}^{K} v_{j \rightarrow [K]\setminus \{j\}} = 1-m_1.
\end{align}


\section{Optimality with Uncoded Placement}\label{sec_lowerbound}
In this section, we first prove the lower bound in Theorem \ref{thm_bound_genie}. Then, we   present the converse proofs for Theorems \ref{thm_equalcache}, \ref{thm_m1_lb}, and \ref{thm_mtot_2}.
\vspace{-0.1in}
\subsection{Proof of Theorem \ref{thm_bound_genie} }\label{proof_thm_genie}
\vspace{-0.05in}
First, we show that the D2D-based delivery assuming uncoded placement can be represented by $K$ index-coding problems, i.e., each D2D transmission stage is equivalent to an index-coding problem. In particular, for any allocation $\bm a \in \mathfrak{A}(\bm m) $, we assume that each subfile $\tilde W_{d_i,\mc S} $ consists of $|S|$ disjoint pieces $\tilde W_{d_i,\mc S}^{(j)}, \ \! j \in \mc S $, where $|\tilde W_{d_i,\mc S}^{(j)} |= a_{\mc S}^{(j)} F$ bits, i.e.,  $ a_{\mc S} = \sum_{j \in \mc S}  a_{\mc S}^{(j)} $. Additionally, the file pieces with superscript $(j)$ represent the messages in the $j$th index-coding problem. 
%
\begin{figure*}[t]
	\centering
	\begin{tabular}{cc}
	\hspace{-0.2in}
	 \subfloat[Index-coding setup.]{ \label{fig:indxcod_1}
				\includegraphics[scale=1.2]{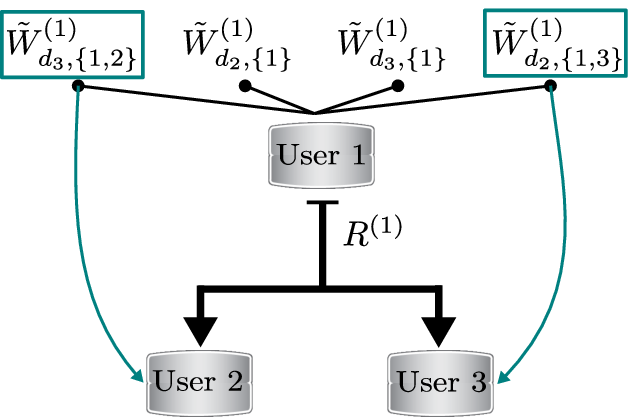} }
		
		& \hspace{-0.2in}
		\subfloat[Graph representation.]{ \label{fig:indxcod_1_grph}
			\includegraphics[scale=1.1]{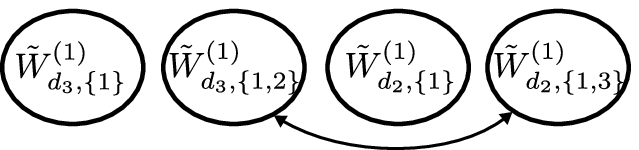}}
		\\
	\end{tabular}       
	\caption{Index-coding problem for $K=3$, and $j=1$.}
\vspace{-0.2in}
\end{figure*} 
%
For instance, consider the first index-coding problem in a three-user system, in which user $1$ acts as a server, see Fig. \subref*{fig:indxcod_1}. User $1$ needs to deliver $\tilde W_{d_2, \{1\}}^{(1)},\tilde W_{d_2, \{1,3\}}^{(1)}$ to user $2$, and  $\tilde W_{d_3,\{1\}}^{(1)}, \tilde W_{d_3, \{1,2\}}^{(1)} $ to user $3$. User $2$ has access to $\tilde W_{d_3, \{1,2\}}^{(1)} $, and user $3$ has access to $\tilde W_{d_2, \{1,3\}}^{(1)} $. The index coding problem depicted in Fig. \subref*{fig:indxcod_1} can be represented by the directed graph shown in Fig. \subref*{fig:indxcod_1_grph}, where the nodes represent the messages and a directed edge from $\tilde W_{*, \mc S}^{(1)} $ to $\tilde W_{d_i, *}^{(1)}$ exists if $i \in \mc S $ \cite{wan2016optimality}. Furthermore, by applying the acyclic index-coding bound \cite[Corollary 1]{arbabjolfaei2013capacity} on Fig. \subref*{fig:indxcod_1_grph}, we get 
\begin{align}
R^{(1)} F \geq  \sum_{i=1}^{K-1} \sum_{\mc S \subset [K] \ \! : \ \! 1 \in \mc S, \{ q_1, \dots, q_i\} \cap \mc S = \phi} |\tilde W_{d_{q_i}, \mc S}^{(1)} |,  \label{eqn_app_1}
\end{align}
where $\bm q  \in \mc P_{\{2,3\}}$ \cite{wan2016optimality,wan2017novel}.  In particular, for $K \! = \! 3$, we have 
\begin{align}
\! \! R^{(1)} F \geq |\tilde W_{d_2, \{1\}}^{(1)} | \! + \! |\tilde W_{d_3,\{1\}}^{(1)}| \! + \! |\tilde W_{d_2, \{1,3\}}^{(1)} |, \ \ \bm q=[2,3], \label{eqn_app_2}\\
\! \! R^{(1)} F \geq |\tilde W_{d_2, \{1\}}^{(1)} | \! + \! |\tilde W_{d_3,\{1\}}^{(1)}| \! + \! |\tilde W_{d_3, \{1,2\}}^{(1)}|, \ \ \bm q=[3,2]. \label{eqn_app_3}
\end{align}
Hence, for a given partitioning $a^{(j)}_{\mc S} $, by taking the convex combination of (\ref{eqn_app_2}), and (\ref{eqn_app_3}), we get   
\begin{align}
R^{(1)}( a^{(1)}_{\mc S},\alpha_{\bm q} ) \geq 2 a_{\{1\}}^{(1)} + \alpha_{[2,3]} a_{\{1,3\}}^{(1)}+ \alpha_{[3,2]} a_{\{1,2\}}^{(1)}, \label{eqn_app_4}
\end{align}
where $\alpha_{\bm q} \geq 0$, and $\alpha_{[2,3]}+\alpha_{[3,2]}=1 $. Similarly, we have
\begin{align}
R^{(2)}(a^{(2)}_{\mc S},\alpha_{\bm q} ) \geq 2 a_{\{2\}}^{(2)} + \alpha_{[1,3]} a_{\{2,3\}}^{(2)}+ \alpha_{[3,1]} a_{\{1,2\}}^{(2)}, \label{eqn_app_5} \\
R^{(3)}(a^{(3)}_{\mc S},\alpha_{\bm q} ) \geq 2 a_{\{3\}}^{(3)} + \alpha_{[1,2]} a_{\{2,3\}}^{(3)}+ \alpha_{[2,1]} a_{\{1,3\}}^{(3)}. \label{eqn_app_6}
\end{align}
Hence, for given $a^{(j)}_{\mc S} $ and $\alpha_{\bm q}$, the D2D delivery load $ \sum_{j=1}^{3} R^{(j)}(a^{(j)}_{\mc S},\alpha_{\bm q} )$ is lower bounded by the sum of the right-hand side of (\ref{eqn_app_4})-(\ref{eqn_app_6}). 
Furthermore, for $K$-user systems, $ R^{(j)}(a^{(j)}_{\mc S},\alpha_{\bm q} )$ is lower bounded by 
\begin{align}\label{eqn_app_12}
\tilde R^{(j)}( a^{(j)}_{\mc S},\alpha_{\bm q} ) \triangleq (K \! - \! 1) a^{(j)}_{\{j\}} \! + \! \! \! \! \sum_{\substack{ \mc S \subset [K] \ \! : \ \!  j \in \mc S, \\ 2 \leq |\mc S| \leq K -1}} \bigg( \sum_{i=1}^{K-|\mc S|} \! \!  \sum_{\substack{\bm q \in  \mc P_{[K]\setminus\{j\}} : \ \!  q_{i+1}  \in \mc S, \\ \{ q_1,\dots,q_{i} \} \cap \mc S=\phi } }  \! \! \! \! \! \! \! \! \! \! \! \! i \! \ \alpha_{\bm q}\bigg) a^{(j)}_{\mc S}. 
\end{align}
By taking the minimum over all feasible allocations and partitions, we get  
\begin{subequations} \label{eqn_app_7}
	\begin{align}
 R^*_{\mathfrak{A}}(\alpha_{\bm q}) \geq  \quad & \min_{a^{(j)}_{\mc S} \geq 0 }  
	& &  \! \! \sum_{j=1}^{K} \tilde R^{(j)}( a^{(j)}_{\mc S},\alpha_{\bm q} )\\
	& \text{subject to}
	& &   \! \! \! \! \! \! \sum\limits_{\mc S \subsetneq_{\phi}[K] } \sum_{j \in \mc S}    a^{(j)}_{\mc S}=1, \label{eqn_app_7_b} \\
	& & &  \! \! \! \! \! \! \! \! \! \! \! \! \sum\limits_{\mc S \subset [K]    :   k \in \mc S } \sum_{j \in \mc S} \! a^{(j)}_{\mc S} \! \leq \! m_k, \forall \ \! k \! \in \! [K]. \label{eqn_app_7_c}
	\end{align}
\end{subequations} 
The dual of the linear program in (\ref{eqn_app_7}) is given by
\begin{subequations}
	\begin{align}
& \max_{\lambda_{0} \in \mathbb{R},\lambda_{k} \geq 0}  
	& &  -\lambda_{0} - \sum_{k=1}^{K} m_k \lambda_{k} \\
	& \text{subject to}
	& &   \! \! \! \! \! \! \! \! \! \! \! \! \lambda_{0} + \sum_{k \in \mc S} \lambda_{k} +  \gamma_{\mc S} \geq 0,  \forall \ \!\mc S  \subsetneq_{\phi} [K], 
	\end{align}
\end{subequations}
where $\gamma_{\mc S}$ is defined in (\ref{eqn_gamma}), $\lambda_0$, and $\lambda_k$ are the dual variables associated with (\ref{eqn_app_7_b}), and (\ref{eqn_app_7_c}), respectively. Finally, by taking the maximum over all possible convex combinations $\alpha_{\bm q}, \forall \bm q \in \mc P_{[K]\setminus\{j\}}, \forall j \in [K] $, we get the lower bound in Theorem \ref{thm_bound_genie}.


\subsection{Converse Proof of Theorem \ref{thm_equalcache}}\label{proof_thm_equalcache}%
By substituting $\alpha_{\bm q} \! = \! 1/(K \! - \! 1) !$ in Theorem \ref{thm_bound_genie}, for $ 2 \leq |\mc S| \leq K\!- \!1$ we get 
\begin{align}
\gamma_{\mc S} &= \min_{j \in \mc S} \bigg\{ \sum\limits_{i=1}^{K-|\mc S|}   \sum\limits_{\substack{\bm q \in  \mc P_{[K]\setminus\{j\}}   : \ \!  q_{i+1}  \in \mc S, \\ \{ q_1,\dots,q_{i} \} \cap \mc S=\phi } }      i/(K \! - \! 1) ! \bigg\}, \\
 &= \sum\limits_{i=1}^{K-|\mc S|} \frac{i}{(K \! - \! 1) !} \binom{K-|\mc S|}{i} \ \! i ! \ \! (|\mc S|-1) \ \! (K \! - \! i - \! 2) ! = \frac{K \!- \! |\mc S|}{|\mc S|},
\end{align}
which follows from the number of vectors $\bm q \in  \mc P_{[K]\setminus\{j\}}$ such that $ q_{i+1}  \in \mc S, $ and $ \{ q_1,\dots,q_{i} \} \cap \mc S=\phi$. In particular, for given $j \! \in \! [K]$, $\mc S \! \subset \! [K] $ such that $j \in \mc S$, and $i \in \{1, \dots, K\! -\! |\mc S|\}$, there are $\binom{K-|\mc S|}{i} \ \! i ! $ choices for $\{q_1,\dots,q_{i} \}$,  $(|\mc S|-1) $ choices for $q_{i+1} $, and $(K \! - \! i - \! 2) ! $ choices for the remaining elements in $ [K] \setminus \big( \{ j\} \cup \{q_1,\dots,q_{i+1} \}\big)$. In turn, for $m_k=m, \ \! \forall k \in [K]$, we have 
\begin{subequations}\label{eqn_app_14} 
	\begin{align} 
 R^*_{\mathfrak{A}}(\bm m) \geq  \quad & \max_{\lambda_0 \in \mathbb{R},\lambda \geq 0 }  	& &  \! \! - \lambda_0 - K m \lambda\\
	& \text{subject to}
	& &   \! \!  \lambda_0 + l \lambda + (K-l)/l \geq 0, \forall \ \! l \! \in \! [K],
	\end{align}
\end{subequations}
which implies
\vspace{-0.15in} 
\begin{align} 
 R^*_{\mathfrak{A}}(\bm m) \geq  & \max_{\lambda \geq 0 } \left\{ \min_{ l \in [K]} \left\{(K-l)/l + \lambda \big( l-K m\big) \right\} \right\} , 	
\end{align}
In particular, for $m= t/K$ and $t \in [K]$, we have  
\vspace{-0.1in}
\begin{align}
R^*_{\mathfrak{A}}(\bm m) \geq \max_{\lambda \geq 0 } \Big\{ \min  \big\{(K \! - \! 1) - (t\! - \! 1)\lambda, \dots, (K \! - \! t)/t, \dots,\lambda K(1\! - \!m) \big\} \Big\}= (K-t)/t,
\end{align}
since this piecewise linear function is maximized by choosing $ \frac{K}{t(t+1)} \leq \lambda^* \leq \frac{K}{t(t-1)}$. In general, for $m= (t+\theta)/K$ and $ 0 \leq \theta \leq 1 $, we get 
\vspace{-0.1in}
\begin{align} 
 R^*_{\mathfrak{A}}(\bm m) &\geq \max_{\lambda \geq 0 } \Big\{ \min  \big\{\dots,(K \! -\! t)/t - \theta \lambda, \ (K\! -\! t \!- \!1)/(t \! + \! 1) - (1-\theta) \lambda, \dots\big\} \Big\} \\
 &= \dfrac{K-t}{t}-\dfrac{\theta K}{t(t+1)}=\dfrac{K-t}{t}-\dfrac{(K m -t) K}{t(t+1)}, 	
\end{align}
which is equal to (\ref{eqn_thm_equalcache}).

\subsection{Converse Proof of Theorem \ref{thm_m1_lb}}\label{proof_conv_thm_m1_lb}%

Similarly, by substituting $\alpha_{\bm q} \! = \! 1/(K \! - \! 1) !$ in Theorem \ref{thm_bound_genie}, we get 
\begin{subequations}\label{eqn_app_14} 
	\begin{align} 
 R^*_{\mathfrak{A}}(\bm m) \geq  \quad & \max_{\lambda_0 \in \mathbb{R},\lambda_j \geq 0 }  	& &  \! \! - \lambda_0 - \sum_{j=1}^{K} \lambda_j m_j \\
	& \text{subject to}
	& &   \! \!  \lambda_0 + \sum_{i \in \mc S} \lambda_i + (K-|\mc S|)/|\mc S| \geq 0. \forall \ \! \mc S \subsetneq_{\phi}[K],
	\end{align}
\end{subequations}
In turn, for $ t \leq \sum_{j=1}^{K} m_j \leq t\!+\!1$, we get
\vspace{-0.1in}
\begin{align} 
 R^*_{\mathfrak{A}}(\bm m) \! &\geq \!  \max_{\lambda \geq 0 } \left\{ \min_{ l \in [K]} \left\{ \! (K\!-\!l)/l \! + \! \lambda \big( l \! - \! \! \sum_{j=1}^{K} m_j \big) \right\} \right\}  \! = \! \dfrac{t K \! + \! (t \! + \! 1)(K \! -\! t)}{t(t\!+\!1)} - \dfrac{K \sum_{j=1}^{K} m_j}{t(t\!+\!1)}.
\end{align}

\subsection{Converse Proof of Theorem \ref{thm_mtot_2}}\label{proof_conv_thm_mtot_2}%
By substituting, $\alpha_{\bm q} \! = \! 1$ for $j \in [l] $, $\bm q=[1,2, \dots,j\!-\!1,j\!+\!1, \dots,K] $, and $\alpha_{\bm q} \! = \! 1/(K\!-\!l\!-\!1)!$ for $j \in \{l\!+\!1,\dots,K\} $, $\bm q=[1,\dots,l,\bm x] $, $ \forall \bm x \in \mc P_{\{l+1,\dots,K\}\setminus \{j\}}$, in Theorem \ref{thm_bound_genie}, we get 
\begin{align}
\gamma_{\mc S} \triangleq \begin{cases} K-1, \text{ for } |\mc S| =1,\\
\dfrac{K+l(|S|\!-\!1)+|S|}{|S|}, \text{ for } \mc S \subset \{l\!+\!1,\dots,K\} \text{ and } 2 \leq \! |\mc S| \! \leq K \! - \! 1, \\
\min\limits_{i \in \mc S} i -1, \text{ for }  \mc S \cap [l] \neq \phi \text{ and } 2 \leq \! |\mc S| \! \leq K \! - \! 1, \\
0, \text{ for } \mc S =[K].
\end{cases}
\end{align}
In turn, $\lambda_0 = - (3K-l-2)/2$, $\lambda_j = K  -  j $ for $j \in [l]$, and $\lambda_j = (K-l)/2 $ for $j \in \{l+1,\dots,K\}$, is a feasible solution to (\ref{eqn_bound_genie}).
\begin{remark} In this region, we achieve the tightest lower bound by choosing $\alpha_{\bm q}$, taking into consideration that the delivery load depends on the individual cache sizes of the users in $[l]$ and the aggregate cache size of the users in $\{l\!+\!1,\dots,K\}$.
\end{remark}

%
%
%
%
%
%

\begin{figure}[t]
\vspace{-.05 in}
\includegraphics[scale=0.65]{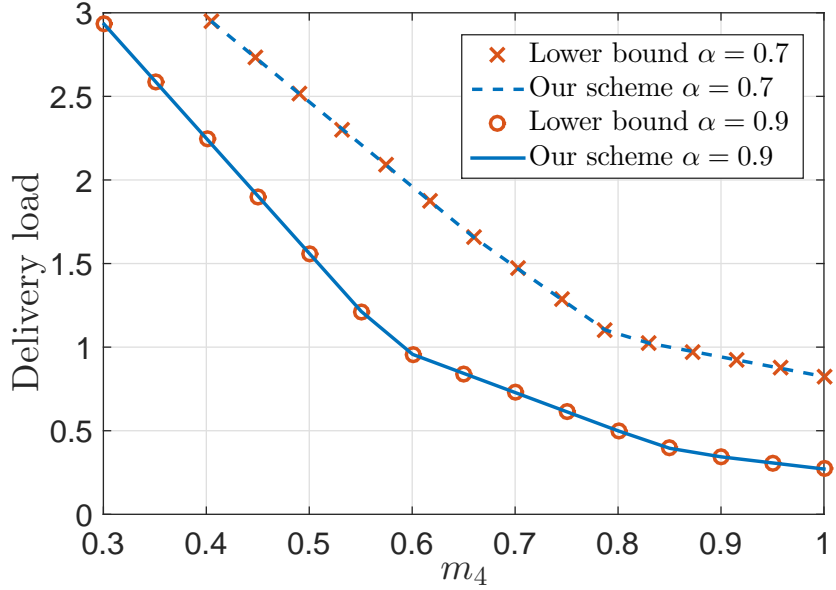}
\centering
\vspace{-.1 in}
\caption{Comparing $R^*_{\mathfrak{A},\mathfrak{D}}(\bm m)$, and lower bound on $R^*_{\mathfrak{A}}(\bm m) $, for $K=N=4$, and $ m_k=\alpha \ \! m_{k+1}$.}\label{fig:comp_dlv}
\end{figure}

\vspace{-0.1in}
\section{Discussion}\label{sec_disc}

\subsection{The D2D Delivery Load Memory Trade-off}

In Section \ref{sec_results}, we have characterized the D2D delivery load memory trade-off with uncoded placement, $R^*_{\mathfrak{A}}(\bm m)$, for several special cases. 
For general systems, we observe numerically that the proposed caching scheme coincides with the delivery load uncoded placement lower bound given in Theorem \ref{thm_bound_genie}. For example, in Fig. \ref{fig:comp_dlv}, we compare the D2D delivery load $R^*_{\mathfrak{A},\mathfrak{D}}(\bm m) $ achievable with our proposed caching scheme with the lower bound on $R^*_{\mathfrak{A}}(\bm m)$ in Theorem \ref{thm_bound_genie}, for $K \! = \! N \! = \! 4$  and $m_k=\alpha \ \! m_{k+1}$, and observe they coincide. 

\subsection{Comparison between Server-based and D2D-based Delivery Loads}
By comparing the server-based system \cite{maddah2014fundamental,ibrahim2018coded} delivery load and D2D-based system delivery load, we observe the following:
\begin{itemize}
\item The D2D-based delivery load memory trade-off with uncoded placement, $R^*_{\mathfrak{A},\text{D2D}}(K,\frac{m_{\text{tot}}}{K}) $, for a system with $K$ users and equal cache size $m=m_{\text{tot}}/K$, is equal to the server-based delivery load memory trade-off assuming uncoded placement for a system with $K\!-\!1$ users and cache size $m=(m_{\text{tot}}\!-\!1)/(K\!-\!1)$, which we denote by $R^*_{\mathfrak{A},\text{Ser}}(K\!-\!1, \frac{m_{\text{tot}}-1}{K-1}) $ \cite{maddah2014fundamental}. In particular, for $m_{\text{tot}} \in [K]$, we have 
\begin{align}
R^*_{\mathfrak{A},\text{Ser}}\Big(K\!-\!1, \frac{m_{\text{tot}}\!-\!1}{K\!-\!1}\Big)=\dfrac{(K\!-\!1)(1-\frac{m_{\text{tot}}-1}{K-1})}{1+(K\!-\!1)(\frac{m_{\text{tot}}-1}{K-1})}=\dfrac{1-\frac{m_{\text{tot}}}{K}}{\frac{m_{\text{tot}}}{K}}=R^*_{\mathfrak{A},\text{D2D}}\Big(K,\frac{m_{\text{tot}}}{K}\Big). 
\end{align}
\item From Theorem \ref{thm_m1_lb}, we conclude that if $m_{\text{tot}} \triangleq \sum\limits_{k=1}^{K} m_k$ and $m_1 \geq (m_{\text{tot}}\!-\!1)/(K\!-\!1)$, then the D2D delivery load memory trade-off with uncoded placement, $R^*_{\mathfrak{A},\text{D2D}}(K,\bm m) $, for a system with $K$ users and distinct cache sizes $\bm m$, is equal to $R^*_{\mathfrak{A},\text{D2D}}(K,\frac{m_{\text{tot}}}{K} ) $. In turn, if $m_1 \geq (m_{\text{tot}}-1)/(K\!-\!1)$, then $R^*_{\mathfrak{A},\text{D2D}}(K,\bm m)= R^*_{\mathfrak{A},\text{Ser}}(K\!-\!1, \frac{m_{\text{tot}}-1}{K-1}) $.
\item For a $K$-user D2D system with $m_K\!=\!1$, the D2D delivery load $R^*_{\mathfrak{A},\text{D2D}}(K,[m_1,\dots,m_{K-1},1])$ is equal to $R^*_{\mathfrak{A},\text{Ser}}(K\!-\!1, [m_1,\dots,m_{K-1}])$. For example, for $K\!=\!3$, we have
\begin{align}
R^*_{\mathfrak{A},\text{D2D}}(3,[m_1,m_{2},1])=R^*_{\mathfrak{A},\text{Ser}}(2, [m_1,m_{2}])=\max \left\{ 2 \! - \! 2 m_1 \! - \! m_2, \ 1  - m_1\right\}.
\end{align}
\end{itemize}


\subsection{Connection between Coded Distributed Computing and D2D Coded Caching Systems}
In coded distributed computing (CDC) systems, the computation of a function over the distributed computing nodes is executed in two stages, named \textit{Map} and \textit{Reduce} \cite{li2018fundamental}. In the former, each computing node maps its local inputs to a set of intermediate values. In order to deliver 
the intermediate values required for computing the final output at each node, the nodes create multicast transmissions by exploiting the redundancy in computations at the nodes. In the latter, each node reduces the intermediate values retrieved from the multicast signals and the local intermediate values to the desired final outputs. 

For CDC systems where the nodes are required to compute different final outputs, the CDC problem can be mapped to a D2D coded caching problem, where the cache placement scheme is uncoded and symmetric over the files \cite{li2018fundamental,wan2018fundamental}. Therefore, the D2D caching scheme proposed in this work can be utilized in heterogeneous CDC systems where the nodes have varying computational/storage capabilities \cite{kiamari2017heterogeneous}. The mapping between the two problems is described in the following remark.
\begin{remark}\label{remark_CDC} A D2D caching system with $K$ users, $N$ files, each with size $F$ symbols, where $m_k$ is the normalized cache size at user $k$, corresponds to a CDC system with $K$ nodes, $F$ files, $N$ final outputs, where $\tilde M_k =m_k F$ is the number of files stored at node $k$. More specifically, in the map stage, node $k$ computes $N$ intermediate values for each cached file. In the reduce stage, node $k$ computes $N/K$ final outputs from the local intermediate values combined with those retrieved from the multicast signals.
\end{remark}
\begin{remark} Reference \cite{kiamari2017heterogeneous} derived the optimal communication load in a heterogeneous CDC system consisting of three nodes with different computational/storage capabilities. As a consequence of Remark \ref{remark_CDC}, the optimal communication load found in \cite{kiamari2017heterogeneous} is the same as the minimum worst-case D2D delivery load with uncoded placement in Theorem \ref{thm_3ue}.
\end{remark}

\vspace{-0.05in}

\section{Conclusions}\label{sec_concl}
In this paper, we have proposed a coded caching scheme that minimizes the worst-case delivery load for D2D-based content delivery to users with unequal cache sizes. We have derived a lower bound on the delivery load with uncoded placement. We have proved the optimality of our delivery scheme for several cases of interest. In particular, we explicitly characterize $R^*_{\mathfrak{A}}(\bm m) $ for the following cases: \begin{enumerate*}[label=(\roman*)]
  \item $m_k \! = \! m, \forall k$,
  \item $(K\!-\!2) m_1 \! \geq \! \sum_{k=2}^{K} m_k \! - \! 1$,
  \item $ \sum_{k=1}^{K} m_k \! \leq \! 2$,
  \item $  \sum_{k=1}^{K} m_k \! \geq \! K\!-\!1$, and 
  \item $K\!=\!3$.
\end{enumerate*}
More specifically, for $m_k \! = \! m, \forall k$, we have shown the optimality of the caching scheme in \cite{ji2016fundamental}. We have also shown that the minimum delivery load depends on the sum of the cache sizes and not the individual cache sizes if the smallest cache size satisfies $(K\!-\!2) m_1 \! \geq \! \sum_{k=2}^{K} m_k \! - \! 1$. 

In the small total memory regime where $ \sum_{k=1}^{K} m_k \! \leq \! 2$, we have shown that there exist $K-1$ levels of heterogeneity and in the $l$th heterogeneity level $R^*_{\mathfrak{A}}(\bm m) $ depends on the individual cache sizes of users $\{1,\dots,l\}$ and the sum of the cache sizes of remaining users. In the large total memory regime where $  \sum_{k=1}^{K} m_k \! \geq \! K\!-\!1$ and $(K\!-\!2) m_1 \! < \! \sum_{k=2}^{K} m_k \! - \! 1$, we have shown that our caching scheme achieves the minimum delivery load assuming general placement. That is, it coincides with the cut-set bound \cite{ji2016fundamental}. We have articulated the relationship between the server-based and D2D delivery problems. Finally, we have discussed the coded distributed computing (CDC) problem \cite{li2018fundamental} and how our proposed D2D caching scheme can be tailored for heterogeneous CDC systems where the nodes have unequal storage.

Future directions include considering heterogeneity in cache sizes and node capabilities for hierarchical cache-enabled networks and general network topologies.



\vspace{-0.1in}
\appendices

\section{Achievability Proof of Theorem \ref{thm_3ue} }\label{app_thm_3ue_ach}
\subsection*{\underline{Region \rm{I}:} $ 1 \leq m_1+m_2+m_3 \leq 2$ and $ m_1 \geq m_2+m_3-1 $}
In this region, we show that there exists a feasible solution to (\ref{eqn_opt}) that achieves $ R^*_{\mathfrak{A},\mathfrak{D}}(\bm m)= \frac{7}{2}  - \frac{3}{2} \big( m_1+m_2+m_3\big)$. In particular, we consider the caching schemes described by $v_{1 \rightarrow \{2\}} = v_{1 \rightarrow \{3\}} = a_{\{1\}}$, $v_{2 \rightarrow \{1\}} = v_{2 \rightarrow \{3\}} = a_{\{2\}}$, $v_{3 \rightarrow \{1\}} = v_{3 \rightarrow \{2\}} = a_{\{3\}}$, $v_{1 \rightarrow \{2,3\}} + v_{2 \rightarrow \{1,3\}} = a_{\{1,2\}}$, $v_{1 \rightarrow \{2,3\}} + v_{3 \rightarrow \{1,2\}} = a_{\{1,3\}}$, $v_{2 \rightarrow \{1,3\}} + v_{3 \rightarrow \{1,2\}} = a_{\{2,3\}}$, and $a_{\{1,2,3\}}=0 $. In turn, the placement feasibility conditions in (\ref{eqn_feas_alloc}) reduce to
\begin{subequations}\label{eqn_app_22} 
\begin{align}
 v_{1 \rightarrow \{2,3\}} + v_{2 \rightarrow \{1,3\}}+ v_{3 \rightarrow \{1,2\}}&=\big( m_1+m_2+m_3-1\big)/2, \\
a_{\{1\}}+ v_{1 \rightarrow \{2,3\}}&=\big( m_1+1-m_2-m_3\big)/2, \\
a_{\{2\}}+ v_{2 \rightarrow \{1,3\}}&=\big( m_2+1-m_1-m_3\big)/2, \\
a_{\{3\}}+ v_{3 \rightarrow \{1,2\}}&=\big( m_3+1-m_1-m_2\big)/2.
\end{align}
\end{subequations}
Note that any caching scheme satisfying (\ref{eqn_app_22}), achieves the D2D delivery load 
\begin{align}
R^*_{\mathfrak{A},\mathfrak{D}}(\bm m) \! &= \! 2 \big( a_{\{1\}} \! + \! a_{\{2\}} \! + \!a_{\{3\}}\big) \! + \! v_{1 \rightarrow \{2,3\}} \! + \! v_{2 \rightarrow \{1,3\}} \! + \! v_{3 \rightarrow \{1,2\}}= \frac{7}{2}  - \frac{3}{2} \big( m_1+m_2+m_3\big)
\end{align}
In turn, we only need to choose a non-negative solution to (\ref{eqn_app_22}), for instance we can choose $a_{\{j\}}=\rho_j \big( 2-m_1-m_2-m_3\big), $ 
such that $\sum_{j=1}^{3} \rho_j =1$, and $ 0 \leq \rho_j \leq \dfrac{2 m_j +1 - \sum_{i=1}^{3} m_i}{2 \big( 2-\sum_{i=1}^{3} m_i\big)}$.

\vspace{-0.1in}
\subsection*{\underline{Region \rm{II}:} $ 1 \leq m_1+m_2+m_3 \leq 2$ and $m_1 < m_2+m_3-1 $}
In this region, we achieve the D2D delivery load $ R^*_{\mathfrak{A},\mathfrak{D}}(\bm m)= 3  - 2 m_1-m_2-m_3$, by considering the caching schemes described by $v_{1 \rightarrow \{2\}} = v_{1 \rightarrow \{3\}} = a_{\{1\}}=0$, $v_{2 \rightarrow \{1\}} = v_{2 \rightarrow \{3\}} = a_{\{2\}}$, $ v_{3 \rightarrow \{2\}} = a_{\{3\}}$, $v_{3 \rightarrow \{1\}} = a_{\{3\}}+ \big( a_{\{2,3\}}-a_{\{1,2\}}-a_{\{1,3\}}\big)$, $v_{1 \rightarrow \{2,3\}}  = 0$, $v_{2 \rightarrow \{1,3\}} = a_{\{1,2\}}$, $v_{3 \rightarrow \{1,2\}} = a_{\{1,3\}}$, $a_{\{2,3\}}=m_2+m_3-1 $ and $a_{\{1,2,3\}}=0 $. Hence, we only need to choose a non-negative solution to the following equations
\begin{align}\label{eqn_app_23} 
a_{\{2\}}+a_{\{1,2\}}=1-m_3, \ a_{\{3\}}+a_{\{1,3\}}&=1- m_2, \
a_{\{1,2\}}+a_{\{1,3\}}=m_1,
\end{align}
which follows from (\ref{eqn_feas_alloc}). Note that any non-negative solution to (\ref{eqn_app_23}), achieves $R^*_{\mathfrak{A},\mathfrak{D}}(\bm m)= 3  - 2 m_1-m_2-m_3 $. For instance, we can choose $a_{\{1,3\}}=0$ when $m_1 +m_3 \leq 1$ and $a_{\{2\}}=0$ when $m_1 +m_3 > 1$.
\vspace{-0.15in}
\subsection*{\underline{Region \rm{III}:} $ m_1+m_2+m_3 > 2$ and $m_2+m_3 \leq 1+m_1 $}
In order to achieve $ R^*_{\mathfrak{A},\mathfrak{D}}(\bm m)= \frac{3}{2}  - \frac{1}{2} \big( m_1+m_2+m_3\big)$, we consider the caching scheme described by $v_{1 \rightarrow \{2,3\}} \! = \! (m_1 \! + \! 1 \! - \! m_2 \! - \! m_3)/2$, $v_{2 \rightarrow \{1,3\}} \! = \! (m_2 \! + \! 1 \! - \! m_1 \! - \! m_3)/2$, $v_{3 \rightarrow \{1,2\}} \! = \! (m_3 \! + \! 1 \! - \! m_1 \! - \! m_2)/2$, $a_{\{1,2\}} \! = \! 1 \! - \! m_3$, $a_{\{1,3\}} \! = \! 1 \! - \! m_2,$ $a_{\{2,3\}} \! = \! 1 \! - \! m_1$, and $a_{\{1,2,3\}} \! = \! m_1 \! + \! m_2 \! + \! m_3 \! - \! 2$.  
%
%
\vspace{-0.3in}
\subsection*{\underline{Region \rm{IV}:} $ m_1+m_2+m_3 > 2$ and $m_2+m_3 > 1+m_1 $}
Finally, $ R^*_{\mathfrak{A},\mathfrak{D}}(\bm m)= 1-m_1$ is achieved by $a_{\{1,2\}} \! = \! 1 \! - \! m_3$, $a_{\{1,3\}} \! = \! 1 \! - \! m_2$, $a_{\{2,3\}} \! = \! 1 \! - \! m_1$, 
$a_{\{1,2,3\}} \! = \! m_1 \! + \! m_2 \! + \! m_3 \! - \! 2$, $ v_{3 \rightarrow \{1\}} \! = \! m_2 +m_3 - m_1 -1$, $v_{2 \rightarrow \{1,3\}} \! = \! 1-m_3$, and $v_{3 \rightarrow \{1,2\}} \! = \! 1-m_2$.  

\vspace{-0.1in}
\section{Converse Proof of Theorem \ref{thm_3ue}  }\label{app_thm_3ue_conv}
By substituting $\alpha_{\bm q}=1/2, \forall \bm q \in \mc P_{[3]\setminus\{j\}}, \forall j \in [3]  $ in Theorem \ref{thm_bound_genie}, we get
\begin{subequations}\label{eqn_app_9} 
	\begin{align} 
 R^*_{\mathfrak{A}}(\bm m) \geq  \quad & \max_{\lambda_{0} \in \mathbb{R},\lambda_{k} \geq 0 }  	& &  \! \! - \lambda_0 - \lambda_1 m_1 - \lambda_2 m_2- \lambda_3 m_3\\
	& \text{subject to}
	& &   \! \!  \lambda_0 +  \lambda_j +2 \geq 0, \forall \ \! j \! \in \! [3],\\
	& & &  \! \! \! \!  \lambda_0 +  \lambda_i + \lambda_j +1/2 \geq 0, \forall \ \! i \! \in \! [3], j \neq i,\\
	& & &  \! \! \! \!  \lambda_0 +  \lambda_1 + \lambda_2 + \lambda_3 \geq 0. 
	\end{align}
\end{subequations}
By choosing two feasible solutions to (\ref{eqn_app_9}), we get
\begin{align}
  R^*_{\mathfrak{A}}(\bm m) \geq \max \left\{ \dfrac{7}{2}  - \frac{3}{2} \big( m_1 \!+ \! m_2 \! + \! m_3\big), \ \dfrac{3}{2}  - \frac{1}{2} \big( m_1 \! + \! m_2 \! + \! m_3\big)\right\}.
\end{align}
Similarly, by substituting $\alpha_{[2,3]}=\alpha_{[1,3]}=\alpha_{[1,2]}=1 $ in Theorem \ref{thm_bound_genie}, we can show that
\begin{align}
  R^*_{\mathfrak{A}}(\bm m) \geq \max \left\{ 3 \! - \! 2 m_1 \! - \! m_2 \! - \! m_3, \ 1  - m_1\right\}.
\end{align}

\bibliographystyle{IEEEtran}
\bibliography{IEEEabrv,references}
 
\end{document}